\declaretheorem[numberwithin=section,refname={Theorem,Theorems},Refname={Theorem,Theorems}]{theorem}
\declaretheorem[numberlike=theorem]{lemma}
\declaretheorem[numberlike=theorem]{corollary}
\declaretheorem[numberlike=theorem,style=remark]{remark}
\def\final{1}  
\def\iflong{\iffalse}
\newcommand{\yonggang}[1]{{\color{blue}[{\tiny Yonggang: \bf #1}]\marginpar{*}}}
\newcommand{\danupon}[1]{{\color{red}[{\tiny Danupon: \bf #1}]\marginpar{\color{red}*}}}
\newcommand{\yonggang}[1]{}
\newcommand{\danupon}[1]{}
\newcommand{\sagnik}[1]{}
\newcommand{\todo}[1]{}
\newcommand{\yuval}[1]{}
\newcommand{\jan}[1]{}
\newcommand{\blikstad}[1]{}
\newcommand{\tawei}[1]{}
\newcommand{\TODO}[1]{}
\Crefname{algocf}{Algorithm}{Algorithms}
\newcommand\footnoteref[1]{\protected@xdef\@thefnmark{\ref{#1}}\@footnotemark}
\renewcommand{\paragraph}[1]{\medskip\noindent{\bf #1}\xspace}
\newcommand{\paren}[1]{\left( #1 \right)}
\newcommand{\level}[2][]{%
	\ifstrempty{#1}{%
		\ell\paren{#2}
	}{%
		\ell_{#1}\paren{#2}
	}%
}
\newcommand{\CASE}[1]{\emph{CASE}-#1}
\newcommand{\Sub}{\mathrm{Sub}}        
\newcommand{\Anc}{\mathrm{Anc}}        
\newcommand{\depth}{\mathrm{depth}}    
\newcommand{\parent}{\mathrm{parent}}  
\DeclareMathOperator{\RowSum}{RowSum}  
\DeclareMathOperator{\ColSum}{ColSum}  
\title{Thin Trees via $k$-Respecting Cut Identities}
\author{Mohit Daga\thanks{KTH Royal Institute of Technology, Stockholm, Sweden, \texttt{mdaga@kth.se}.\\
Part of this research was conducted during the author’s visit to CyStar Labs at IIT Madras, Chennai, India.}}
\begin{document}
	
	\begin{titlepage}
		\maketitle \pagenumbering{roman}
		
		\begin{abstract}
Thin spanning trees lie at the intersection of graph theory, approximation
algorithms, and combinatorial optimization. They are central to the long-standing
\emph{thin tree conjecture}, which asks whether every $k$-edge-connected graph
contains an $O(1/k)$-thin tree, and they underpin algorithmic breakthroughs such
as the $O(\log n/\log\log n)$-approximation for ATSP. Yet even the basic
algorithmic task of \emph{verifying} that a given tree is thin has remained
elusive: checking thinness requires reasoning about exponentially many cuts, and
no efficient certificates have been known.

We introduce a new machinery of \emph{$k$-respecting cut identities}, which
express the weight of every cut that crosses a spanning tree in at most $k$
edges as a simple function of pairwise ($2$-respecting) cuts. This yields
a tree-local oracle that, after $O(n^2)$ preprocessing, evaluates such cuts in
$O_k(1)$ time. Building on this oracle, we give the first procedure to compute
the exact $k$-thinness certificate $\Theta_k(T)$ of any spanning tree for fixed $k$
in time $\tilde O(n^2+n^k)$, outputting both the certificate value and a
witnessing cut. 

We then combine certificate evaluation with fractional tree packings and cut
counting: sampling a small random family of trees suffices so that, with high
probability, \emph{every} $\alpha$-near-minimum cut is $k$-respecting in at least
one sampled tree for $k=\Theta(\alpha\log n)$. Evaluating $\Theta_k(\cdot)$ on
the samples yields an \emph{explicit, verifiable ensemble certificate} covering
all such cuts: for each light cut $A$ there exists a sampled tree $T_i$ with
$\frac{|T_i\cap\delta(A)|}{w(\delta(A))}\le O((\log n)/\lambda)$, where $\lambda$
is the edge-connectivity.

Beyond general graphs, our framework yields sharper guarantees in structured
settings. In planar graphs, duality with cycles and dual girth imply that
every spanning tree admits a verifiable certificate $\Theta_k(T)\le k/\lambda$
(hence $O(1/\lambda)$ for constant $k$). In graphs embedded on a surface of genus
$\gamma$, refined counting gives certified (per-cut) bounds
$O((\log n+\gamma)/\lambda)$ via the same ensemble coverage.

Conceptually, we isolate $\Theta_k(T)$ as an exactly computable, certifiable, and
practically improvable target, turning thinness verification into a tree-local
optimization over $k$-respecting cuts. This provides a concrete algorithmic route
toward the thin-tree program, and applies verbatim to laminar families of cuts,
where smaller sampling parameters yield compact, verifiable certificates.
\end{abstract}
		
		\setcounter{tocdepth}{3}
		\newpage
		\tableofcontents
		\newpage
	\end{titlepage}
	
	\newpage
	\pagenumbering{arabic}

\section{Introduction}

Thin spanning trees are a unifying structure at the interface of graph theory,
approximation algorithms, and combinatorial optimization. A spanning tree $T$ of
an undirected weighted graph $G=(V,E,w)$ is called $\alpha$-\emph{thin} if for
every cut $A\subseteq V$ we have
\[
\lvert T\cap \delta(A)\rvert \;\le\; \alpha \cdot w(\delta(A)).
\]
Thin trees capture the tension between local tree structure and global cut
structure, and they have become a central object for both structural graph
theory and algorithm design.

\subsection{Thin trees and their conjectured power}
The systematic study of thin trees originates in a conjecture of
Goddyn~\cite{God04}, now known as the \emph{thin tree conjecture}, which posits
that every $k$-edge-connected graph contains a spanning tree whose thinness is
$o(k)$. The stronger form, the \emph{strong thin tree conjecture}, asserts the
existence of $O(1/k)$-thin trees in every $k$-edge-connected graph, which would be
tight up to constants, as no tree can cross fewer than a $1/k$ fraction of edges
in every cut. Despite decades of effort, the conjecture remains open
\cite{God04,OS11,anariog2015focs}.

Thin trees are not only structurally natural, but algorithmically powerful. They
imply Jaeger’s weak 3-flow conjecture~\cite{Jae84}, and in constructive form
would yield constant-factor approximations for the asymmetric traveling salesman
problem (ATSP)~\cite{asadpour2017}. Although ATSP has since seen independent
$O(1)$-approximations~\cite{STV18,TraubVygen2024}, thin trees remain a guiding
principle, with open fronts such as bottleneck ATSP~\cite{AKS21} where thinness
is still the key missing piece.

The thin tree conjecture is settled in several restricted settings. In planar
and bounded-genus graphs, thin spanning trees are known to exist and can be
constructed efficiently~\cite{OS11}. Spectral relaxations provide another
perspective: Harvey and Olver~\cite{HO14}, building on the
Kadison--Singer breakthrough of Marcus, Spielman, and Srivastava~\cite{MSS13},
showed that an analogue of the conjecture holds when connectivity is replaced
by effective conductance. This yields $O(1/k)$-spectrally thin trees in
edge-transitive graphs, although in general there are graphs with no
$o(\sqrt{n}/k)$-spectrally thin tree~\cite{HO14,Goe12}. Nonetheless, spectral
techniques underlie the current best existence guarantee for general graphs:
Anari and Oveis Gharan~\cite{anariog2015focs} proved non-constructively that any
$k$-edge-connected graph admits an $O(\frac{\text{polyloglog}\, n}{k})$-thin tree.

Constructively, however, the best known bound remains
$O(\tfrac{\log n}{\log\log n \cdot k})$ via the maximum-entropy sampling method
of Asadpour, Goemans, Madry, Oveis Gharan, and Saberi for
ATSP~\cite{asadpour2017}, with subsequent refinements in related
settings~\cite{anariog2015focs}. In parallel, recent surveys and
monographs (e.g.,~\cite{TraubVygen2024}) synthesize these developments and
emphasize thin trees as a unifying primitive for approximation in connectivity
problems.

\subsection{Packing and cut structure}
A classical backbone for thinness is the Nash--Williams/Tutte tree-packing
theorem~\cite{nashwilliams1961,tutte1961}, which asserts that a $2k$-edge-connected
graph contains $k$ edge-disjoint spanning trees. This structural guarantee
generalizes to large fractional tree packings, which provide a convex
combination of spanning trees that ``spread out'' across the edges. Such
packings underlie both extremal results on cuts and efficient algorithms for
connectivity problems.

On the cut side, the interplay between cuts and cycles is a classical theme in
graph theory (see, e.g., Bondy--Murty~\cite{bondy1976graph}). In planar graphs,
every minimal cut (bond) corresponds to a cycle in the dual, and in general
graphs the cut space forms a vector space closed under symmetric difference.
These structural facts are central to min-cut algorithms and cut enumeration.

Probabilistic arguments, starting with Karger’s random contraction
and sampling methods~\cite{karger1998}, show that the number of cuts of value at
most $\alpha\lambda$ (where $\lambda$ is the edge-connectivity) is bounded by
$n^{O(\alpha)}$. This tight counting result, together with tree packings, yields
powerful covering properties: a small random family of spanning trees suffices
to ``respect'' every near-minimum cut in few edges. Such packing and counting
principles are the combinatorial backbone behind modern thin-tree results and
form the starting point for our certificate framework.

\subsection{Our contribution and conceptual message}

This paper introduces a new algebraic machinery based on
\emph{$k$-respecting cut identities}.  
The central idea is simple but powerful:
\begin{quote}
\emph{Every cut that crosses a spanning tree $T$ in at most $k$ edges can be
expressed as a symmetric difference of descendant sets, and its weight can be
written in closed form using only $O(n^2)$ pairwise quantities.}
\end{quote}
Thus the exponential family of $k$-respecting cuts collapses to evaluations over
a quadratic-size table of pairwise statistics. This transforms the problem of
cut evaluation from intractable global structure to tree-local combinatorics.
Building on this collapse, we develop an explicit, verifiable, and
algorithmic route to certified thinness:

\begin{itemize}
  \item \textbf{Polynomial-time certificates (for fixed $k$).}
  For any fixed constant $k\ge 2$, we compute
  \[
    \Theta_k(T)=\max_{\substack{A\subseteq V\\ \lvert T\cap\delta(A)\rvert\le k}}
    \frac{\lvert T\cap\delta(A)\rvert}{w(\delta(A))} 
  \]
  \emph{exactly}, together with a witnessing cut, in time
  $\tilde{O}(n^2 + n^k)$ and space $O(n^2)$. This yields the first verifiable
  \emph{$k$-thinness certificate} for any spanning tree.

  \item \textbf{Ensemble coverage with per-cut certificates.}
  Combining randomized tree packings with cut counting, a small random family
  of spanning trees covers all near-minimum cuts: with
  $s=\Theta(\alpha\log n+\log(1/\eta))$ samples and $k=\Theta(\alpha\log n)$,
  with probability at least $1-\eta$ every cut $A$ of value $\le \alpha\lambda$
  is $k$-respecting in \emph{at least one} sampled tree. Evaluating
  $\Theta_k(\cdot)$ for each sampled tree then yields an explicit, verifiable
  certificate for \emph{each} such cut $A$, namely
  \[
     \frac{|T_i\cap\delta(A)|}{w(\delta(A))}\ \le\ \Theta_k(T_i)\ \le\ \frac{k}{\lambda}
     \ =\ O\!\Big(\tfrac{\log n}{\lambda}\Big)\quad\text{for some }i,
  \]
  where the last inequality uses $w(\delta(A))\ge \lambda$. This is an
  ensemble (per-cut) guarantee rather than a single-tree global bound.

  \item \textbf{Local improvement.}
  We design a local search based on fundamental-cycle swaps, with incremental
  updates to the pairwise table, that monotonically improves $\Theta_k(T)$
  without enumerating all cuts.

  \item \textbf{Special cases.}
  In planar graphs, duality and dual girth imply the certificate bound
  $\Theta_k(T)\le k/\lambda$ for \emph{every} spanning tree $T$; for constant
  $k$ this is $O(1/\lambda)$. More generally, in graphs of genus
  $\gamma$, refined cut-counting yields certified (per-cut) bounds
  $O((\log n+\gamma)/\lambda)$ via the same ensemble-coverage principle.
\end{itemize}

\medskip
\noindent
\textbf{Conceptual message.}  
Our innovation is to identify $\Theta_k(T)$ as a concrete optimization target
that can be exactly computed, certified, and improved. The $k$-respecting cut
identities compress the exponential family of cuts into polynomially many
pairwise statistics, providing \emph{explicit and verifiable} certificates of
thinness. Together with ensemble coverage, this offers a constructive path
toward the thin tree program while remaining faithful to what can be certified.

\section{Preliminaries and Results}
\label{sec:framework}

This section introduces the notation and theoretical background
underlying our results. We review the notions of cuts, thinness,
and the cut space, and explain how these lead naturally to
$k$-respecting cuts and their evaluation. We then state the
main theorems that will be proved in the following sections.

Let $G=(V,E,w)$ be an undirected graph with $n=|V|$ vertices,
$m=|E|$ edges, and nonnegative edge weights $w:E\to\mathbb{R}_{\ge 0}$.
For a set $A\subseteq V$, the \emph{cut} induced by $A$ is
\[
\delta(A) \;=\; \{\, e=\{u,v\}\in E : |\,\{u,v\}\cap A\,|=1\,\}.
\]
We write $w(\delta(A))=\sum_{e\in\delta(A)}w(e)$ for its weight.
The \emph{global min-cut} value of $G$ is
\[
\lambda \;=\; \min_{\emptyset \subsetneq A \subsetneq V} \; w(\delta(A)).
\]

A spanning tree $T$ of $G$ is a connected acyclic subgraph on $V$.
For any edge $f\in E\setminus T$, adding $f$ to $T$ creates a unique cycle,
and removing an edge $e$ on this cycle yields another spanning tree
$T-e+f$. Such exchanges are called \emph{edge swaps}, and they will play
a role in our local improvement procedures. We write $\tilde{O}(\cdot)$ to suppress
polylogarithmic factors in $n$, i.e., $\tilde{O}(f(n)) = O(f(n)\cdot \mathrm{polylog}\, n)$.

\subsection{Thinness and its algorithmic role}

A spanning tree $T$ is \emph{$\beta$-thin} if
\[
\frac{|E(T)\cap\delta(A)|}{w(\delta(A))} \;\le\; \beta
\qquad \text{for all } A\subseteq V.
\]
Thinness formalizes how well a single tree can “track’’ all cuts of a weighted
graph. The notion (in closely related forms) appears in the thin-tree
literature around Goddyn’s conjecture~\cite{God04}, in the planar setting of
Oveis Gharan–Saberi~\cite{OS11}, and in algorithmic work on ATSP where one asks
for trees that are thin with respect to an LP solution~\cite{asadpour2017}; see
also spectral variants~\cite{anariog2015focs}. These lines of work highlight thin trees
as a structural proxy that enables rounding and decomposition arguments across connectivity problems.

From this perspective, two questions are fundamental:
(i) \emph{How thin a spanning tree can one efficiently find?} and
(ii) \emph{How can one efficiently \emph{verify} thinness for a given tree?}
The second question is surprisingly stubborn: naively, verification seems to
require inspecting exponentially many cuts, and no general, efficiently
checkable \emph{certificate of thinness} has been available.

\paragraph{Our angle.}
We resolve the verification bottleneck for cuts that cross the tree in at most
$k$ edges. We show that the weight of every such cut admits a closed form in
terms of only $O(n^2)$ pairwise ($2$-respecting) quantities, yielding a
tree-local oracle that evaluates $k$-respecting cuts in $O(k^2)$ time after
$O(n^2)$ preprocessing (i.e., $O_k(1)$ for fixed $k$). This enables \emph{polynomial-time certification}:
we compute the exact
\[
\Theta_k(T)\;=\;\max_{\substack{A\subseteq V\\ |E(T)\cap\delta(A)|\le k}}
\frac{|E(T)\cap\delta(A)|}{w(\delta(A))},
\]
and output a witnessing cut.
A useful baseline bound, used repeatedly below, is
\begin{equation}\label{eq:trivial-k-over-lambda}
\Theta_k(T)\ \le\ \frac{k}{\lambda}\qquad\text{for every tree $T$ and $k\ge 1$},
\end{equation}
since every nonempty cut has weight at least $\lambda$ while $|T\cap\delta(A)|\le k$
for $k$-respecting cuts.
In combination with randomized tree packings and cut counting, exact
$\Theta_k(\cdot)$ values provide \emph{ensemble, per-cut certificates} for all $\alpha$-near-minimum
cuts: with high probability, for each such cut there exists a sampled tree that certifies an
$O(k/\lambda)=O((\alpha\log n)/\lambda)$ ratio (and $O((\log n)/\lambda)$ when $\alpha=O(1)$).
This is an ensemble guarantee rather than a single-tree global bound.

\subsection{The cut space and symmetric difference}
\label{sec:cutspace}

The set of cuts in a graph forms a vector space over $\mathbb{F}_2$,
called the \emph{cut space}. Each cut $\delta(A)$ is represented by
its incidence vector in $\{0,1\}^E$, and addition is taken mod~2.
Equivalently, the symmetric difference of two cuts is again a cut:
\[
\delta(A)\oplus\delta(B) \;=\; \delta(A\oplus B).
\]
This algebraic structure also holds for any number of vertex sets $A_1,A_2,\ldots,A_t$,
and underlies Karger’s min-cut algorithms.
It is key to obtaining closed-form formulas for the size of
complex cuts in terms of simpler ones.

In particular, when $T$ is rooted at $r\in V$, the sets of descendants
$D_T(v)$ (the subtree rooted at $v$) form a laminar family.
Cuts induced by descendant sets can be combined by symmetric difference
to represent arbitrary $t$-respecting cuts.
We adopt the following terminology:

\medskip
\noindent
\textbf{Definition (Respecting a tree).}
A cut $A$ is called \emph{$k$-respecting} with respect to $T$
if $|E(T)\cap\delta(A)|\le k$. When we need exact cardinality, we
say \emph{$t$-respecting} to mean $|E(T)\cap\delta(A)|=t$.
In either case, there exist $t\le k$ vertices $v_1,\dots,v_t$ such that
\[
A \;=\; D_T(v_1)\oplus \cdots \oplus D_T(v_t),
\]
where $v_1,\dots,v_t$ are the child endpoints of the $t$ crossed tree edges.
This gives a compact description of $k$-respecting cuts in terms
of descendant sets.

Given a spanning tree $T$, we define its \emph{$k$-thinness certificate} as
\[
\Theta_k(T) \;=\;
\max_{\substack{A\subseteq V\\ |E(T)\cap\delta(A)|\le k}}
\frac{|E(T)\cap\delta(A)|}{w(\delta(A))}.
\]
This parameter restricts attention to cuts intersecting the tree in at most $k$ edges.
It is immediate that $\Theta_k(T)\le \beta$ whenever $T$ is $\beta$-thin,
and~\eqref{eq:trivial-k-over-lambda} gives the universal bound $\Theta_k(T)\le k/\lambda$.
When a near-minimum cut is $k$-respecting in some tree, $\Theta_k(\cdot)$
provides a verifiable per-cut bound for that cut.

\subsection{Planar and bounded-genus graphs}
\label{sec:planar-prelim}

In planar graphs, the duality between cuts and cycles yields
sharper structural statements. Every bond (minimal cut) corresponds
to a simple cycle in the dual, and the dual girth $g^\ast$ lower-bounds
all cut weights; in unweighted $2$-edge-connected planar graphs, $g^\ast=\lambda$.
These facts imply that planar graphs admit
certificates with $\Theta_k(T)\le k/\lambda$ for \emph{every} spanning tree $T$,
which (for fixed $k$) is asymptotically stronger than the general $O((\alpha\log n)/\lambda)$
ensemble guarantee. In graphs of genus $\gamma$, similar reasoning shows that the
number of near-min cuts grows only by an additive $O(\gamma)$ term,
leading to improved sampling guarantees.

\subsection{Main theorems}

We are now ready to state the main results proved in the remainder of the paper.

\begin{restatable}[Exact evaluation of $k$-respecting cuts]{theorem}{krespect}
\label{thm:krespect}
Let $G=(V,E, w)$ be a weighted graph and let $T$ be a rooted spanning tree.  
For any cut $A \subseteq V$ with $|E(T)\cap \delta(A)| \le k$, the cut value
$w(\delta(A))$ can be expressed in closed form using only
pairwise quantities of the form $w(\delta(D_T(u)\oplus D_T(v)))$.
In particular, after $O(n^2)$ pre-processing of all pairwise values,
$w(\delta(A))$ can be evaluated in $O(k^2)$ time for any $k$-respecting cut.
\end{restatable}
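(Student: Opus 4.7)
The plan is to reduce $w(\delta(A))$ for any $k$-respecting cut $A$ to an explicit linear combination of the pairwise quantities $\Omega(u,v) := w(\delta(D_T(u) \oplus D_T(v)))$, via a parity expansion followed by a tree-structural collapse of higher-order intersections. First, using the representation $A = D_T(v_1) \oplus \cdots \oplus D_T(v_t)$ with $t \le k$ from the preliminaries, observe that for each edge $e$ of $G$ the indicator $\mathbf{1}[e \in \delta(A)]$ is the parity of $\sum_{i=1}^t \mathbf{1}[e \in \delta(D_T(v_i))]$. Applying the elementary identity $[x_1 + \cdots + x_t \text{ odd}] = \sum_{\emptyset \neq I \subseteq [t]} (-2)^{|I|-1} \prod_{i\in I} x_i$ for $x_i \in \{0,1\}$, weighting by $w(e)$, and summing over edges yields
\[
w(\delta(A)) \;=\; \sum_{\emptyset \neq I \subseteq [t]} (-2)^{|I|-1}\, f(I), \qquad f(I) \;:=\; w\!\Big(\bigcap_{i \in I} \delta(D_T(v_i))\Big).
\]
The remaining task is to rewrite every $f(I)$ with $|I|\ge 2$ in terms of pairwise $\Omega$'s.

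The crux is a tree-structural collapse of $f(I)$ based on the geometric fact that an edge $e = \{x,y\}$ of $G$ crosses $\delta(D_T(v))$ iff $v$ lies on the tree path $P_T(x,y)$ and $v \neq \mathrm{lca}_T(x,y)$. Hence $e$ contributes to $f(I)$ iff $\{v_i : i \in I\}$ is contained in the punctured path $P_T(x,y) \setminus \{\mathrm{lca}\}$, which in tree-poset terms is the disjoint union of two incomparable chains hanging strictly below the lca. From this I would derive that $f(I) = 0$ unless the Steiner tree of $\{v_i : i \in I\}$ in $T$ is a path whose top vertex is not in $I$, and that in every nonzero case $f(I)$ reduces to the pairwise intersection of the two deepest elements (one per side of the V). Concretely: (i) if $\{v_i : i \in I\}$ is a chain, $f(I) = f(\{\min_T I, \max_T I\})$, since containment of the whole chain in a tree path reduces to containment of its two endpoints; (ii) if $\{v_i : i \in I\}$ splits into two incomparable chains $C_1, C_2$ whose least common upper bound in $T$ is not in $I$, then $f(I) = f(\{\max_T C_1, \max_T C_2\})$ by the same argument applied on each side; (iii) every other configuration — three pairwise incomparable vertices, or a V whose apex lies in $I$ — forces $f(I) = 0$.

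To close, convert each surviving pairwise intersection via $w(\delta(D_u) \cap \delta(D_v)) = \tfrac{1}{2}(c(u) + c(v) - \Omega(u,v))$, where $c(u) = w(\delta(D_u))$ is itself a pairwise quantity $\Omega(u,r)$ using the root $r$ (since $D_r = V$). Substituting back, $w(\delta(A))$ becomes an integer linear combination of the $\Omega(v_i, v_j)$'s whose coefficients depend only on the tree-poset structure of $\{v_1,\dots,v_t\}$, proving the closed-form claim. Algorithmically, precompute the $n \times n$ table $\Omega(\cdot, \cdot)$ in $O(n^2)$ time by a standard tree DP that aggregates each edge's contribution along its tree path; given a $k$-respecting cut specified by its $\le k$ child endpoints, resolve pairwise ancestor/descendant relationships in $O(k^2)$ time (using constant-time LCA after $O(n)$ preprocessing) and assemble the $O(k^2)$ coefficients.

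The main obstacle will be establishing the tree-structural collapse rigorously, especially case (iii): showing that $f(I) = 0$ whenever the vertices of $I$ fail to fit inside any punctured tree path, via a Steiner-tree argument ruling out both three-way branching and apex-in-$I$ configurations. The reductions in cases (i) and (ii) then follow from the same geometric picture once it is fixed. A secondary subtlety is the coefficient bookkeeping for the final pairwise expansion: each $\Omega(v_i, v_j)$ coefficient aggregates contributions from all $I$ whose collapse is $\{v_i, v_j\}$, and this should telescope through the binomial identity $\sum_j (-2)^j \binom{m}{j} = (-1)^m$ applied to the ``internal'' vertices lying strictly between $v_i$ and $v_j$ on their joint chain, yielding coefficients that are signed powers of $2$ and are tracked in $O(k^2)$ time at query.
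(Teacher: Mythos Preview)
Your approach is essentially the same as the paper's: the parity inclusion–exclusion for $w(\delta(A))$ is exactly the paper's Lemma~3.1, and your collapse of each $f(I)$ to a pairwise value or zero via cases (i)–(iii) is the paper's pairwise-sufficiency Lemma~3.2, proved in Appendix~A by an equivalent four-case laminar analysis (their iterative CASE~4 elimination playing the role of your direct identification of the collapsed pair in case~(ii)). One slip to fix: your summary sentence ``$f(I)=0$ unless the Steiner tree \dots\ is a path whose top vertex is not in $I$'' is false for chains, where the top vertex \emph{is} in $I$ yet $f(I)\neq 0$, contradicting your own correct case~(i); drop the ``top vertex not in $I$'' clause and rely on the detailed cases (i)–(iii), which are complete and correct. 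Your explicit binomial telescoping of the $2^k$ signed terms down to $O(k^2)$ closed-form coefficients per pair is a useful addition that the paper only asserts without justification.
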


\begin{restatable}[Exact evaluation of $k$-thinness]{theorem}{exacteval}
\label{thm:exact-eval}
For any constant $k \ge 2$, one can compute
\[
\Theta_k(T) \;=\; \max_{\substack{A \subseteq V \\ |E(T)\cap \delta(A)| \le k}}
\frac{|E(T)\cap \delta(A)|}{w(\delta(A))}
\]
exactly in time $\tilde O(n^2+n^k)$ and space $O(n^2)$. 
The algorithm also outputs a cut $A^\star$ that achieves the maximum.
After a single edge swap $T' = T-e+f$, the value $\Theta_k(T')$ can be updated in amortized $\tilde O(n^{k-1})$ time.
\end{restatable}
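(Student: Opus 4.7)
The plan is to bootstrap \Cref{thm:krespect} into a brute-force enumeration over $k$-respecting cuts. The preliminaries give the key structural handle: every cut with $|E(T)\cap\delta(A)|=t\le k$ has the form $A=D_T(v_1)\oplus\cdots\oplus D_T(v_t)$ for some $t$-subset of non-root vertices, and there are only $O(n^k)$ such subsets when $k$ is constant, so iterating over all of them is affordable as long as each is evaluated cheaply.

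I would first invest $\tilde O(n^2)$ time and $O(n^2)$ space to build the pairwise table $M[u,v]=w(\delta(D_T(u)\oplus D_T(v)))$ guaranteed by \Cref{thm:krespect}, using standard tree primitives (DFS in/out labels, LCA, path aggregations) to populate $M$ in this budget. Then I would iterate over all $O(n^k)$ subsets $\{v_1,\dots,v_t\}$ of size at most $k$: for each, the closed form from \Cref{thm:krespect} expresses $w(\delta(A))$ via the $O(k^2)$ relevant entries of $M$ and evaluates in $O(k^2)$ time, while a separate $O(k^2)$ ancestry check on the in/out labels certifies that $|E(T)\cap\delta(A)|$ really equals $t$ (and not smaller because of cancellations when some $v_i$ is an ancestor of another). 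Maintaining the running maximum of $t/w(\delta(A))$ together with its argmax subset yields both $\Theta_k(T)$ and the witness $A^\star$ in the claimed $\tilde O(n^2+n^k)$ runtime with $O(n^2)$ auxiliary space.

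For a swap $T'=T-e+f$, I would first identify the set $S$ of vertices whose descendant sets change (those on the fundamental cycle of $f$ in $T$, together with the re-rooted subtree), refresh only the rows and columns of $M$ indexed by $S$, and re-evaluate $\Theta_k$ solely on subsets containing at least one vertex of $S$ — every other subset inherits its previous value, since neither its cut weight nor its cardinality can have moved. Fixing one vertex in $S$ and completing with $\le k-1$ arbitrary vertices yields at most $O(n^{k-1})$ dirty subsets, each re-evaluated in $O(k^2)$, which accounts for the $\tilde O(n^{k-1})$ re-evaluation term.

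The hard part will be the amortization of the pairwise-table maintenance: a worst-case swap can flip $\Theta(n)$ descendant sets, forcing $\Theta(n^2)$ entries of $M$ to be refreshed and already blowing the target at $k=2$. To recover the amortized $\tilde O(n^{k-1})$ bound I plan to layer $M$ on a heavy-path (or top-tree) decomposition, so that an affected path is represented through $\tilde O(1)$ aggregates rather than per-vertex and only the entries actually queried by the $O(n^{k-1})$ dirty subsets are materialized on demand; any extra path-level work is then charged against those queries. This amortization, together with the bookkeeping needed to handle degenerate subsets (ancestor collisions and duplicates) correctly, is where the technical care concentrates; the enumeration and oracle steps themselves follow directly from \Cref{thm:krespect}.
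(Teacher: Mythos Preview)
Your proposal is essentially the paper's own proof: build the pairwise $\sigma$-table via \Cref{thm:krespect} in $\tilde O(n^2)$, enumerate all $O(n^k)$ tuples of child endpoints evaluating each in $O(k^2)$, and for a swap re-evaluate only tuples containing a vertex whose descendant set changed. The paper invokes its swap lemma to bound the affected set by $O(|C|)$ and gets $O(|C|\cdot n^{k-1})$ per swap, simply declaring this ``amortized $\tilde O(n^{k-1})$'' without the heavy-path machinery you propose---so your caution about worst-case $|C|$ is well-founded, but the paper itself does not address it more carefully than you do.
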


\begin{restatable}[Near-min cuts become $k$-respecting]{theorem}{krespcoverage}
\label{thm:k-resp-coverage}
Let $\lambda$ be the global min-cut value of $G$. 
Fix $\alpha \ge 1$ and $\eta \in (0,1)$. 
There exists a distribution $\mathcal{D}$ over spanning trees, 
obtainable from an $\varepsilon$-approximate fractional tree packing, 
and integers
\[
s = O\!\big(\alpha \log n + \log(1/\eta)\big),
\qquad
k = O\!\big(\alpha \log n\big),
\]
such that if $T_1,\dots,T_s \overset{i.i.d.}{\sim} \mathcal{D}$, 
then with probability at least $1-\eta$, 
every cut $A$ with $w(\delta(A)) \le \alpha \lambda$ is $k$-respecting 
in at least one $T_i$.
\end{restatable}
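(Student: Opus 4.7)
The plan is to combine three ingredients: an $\varepsilon$-approximate fractional spanning-tree packing (to define the sampling distribution $\mathcal{D}$), Markov's inequality on the number of tree edges crossing a fixed cut, and Karger's cut-counting bound on $\alpha$-near-minimum cuts. Schematically the argument follows the ``expectation $+$ Markov $+$ union bound'' template; the real work lies in setting up the correct tree-edge marginals and tracking the constants that feed into the Markov step.

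First, I would invoke the (algorithmic) Nash--Williams/Tutte fractional tree-packing theorem: every weighted graph admits a fractional packing of spanning trees of value at least $(1-\varepsilon)\lambda/2$. Normalizing the packing weights to a probability distribution $\mathcal{D}$ on spanning trees yields the edge marginal
\[
\Pr_{T\sim\mathcal{D}}[\,e\in T\,]\;\le\;\frac{(2+O(\varepsilon))\,w(e)}{\lambda}\qquad\text{for every } e\in E.
\]
Summing over the edges of a cut $A$ gives $\mathbb{E}_{T\sim\mathcal{D}}[\,|E(T)\cap\delta(A)|\,]\le (2+O(\varepsilon))\,w(\delta(A))/\lambda$, so whenever $w(\delta(A))\le\alpha\lambda$ the expected crossing count is at most $(2+O(\varepsilon))\alpha$.

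The second step is a single-tree Markov bound. Choosing $k$ to be a sufficiently large constant multiple of $\alpha$ (the theorem's $k=O(\alpha\log n)$ is slack),
\[
p_A\;:=\;\Pr_{T\sim\mathcal{D}}\bigl[\,|E(T)\cap\delta(A)|>k\,\bigr]\;\le\;\tfrac12.
\]
Since the $s$ samples are independent, the probability that no $T_i$ is $k$-respecting for $A$ is at most $p_A^s\le 2^{-s}$. Karger's near-min-cut counting bounds the number of cuts of weight at most $\alpha\lambda$ by $O(n^{2\alpha})$; a union bound then controls the overall failure probability by $O(n^{2\alpha})\cdot 2^{-s}$, which drops below $\eta$ once $s=C(\alpha\log n+\log(1/\eta))$ for a suitable absolute constant $C$. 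This delivers both parameter estimates $s=O(\alpha\log n+\log(1/\eta))$ and $k=O(\alpha\log n)$ claimed in the statement.

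The main obstacle is mechanical rather than conceptual: converting an $\varepsilon$-approximate fractional packing into a genuinely sampleable distribution whose edge marginals obey the stated bound, and ensuring that the additive $O(\varepsilon)$ losses do not erode the constant in front of Markov. This is handled by the standard toolkit (multiplicative-weight or iterative-rounding packing algorithms, followed by an extreme-point / convex-combination decomposition of the LP solution), but must be written out carefully so that the per-cut failure probability stays bounded away from $1$. Once the marginal inequality is secured, the cut-counting and union-bound machinery plug in directly and yield the stated dependence on $\alpha$ and $\eta$.
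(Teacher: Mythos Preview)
Your proposal is correct and follows essentially the same approach as the paper: fractional tree packing to control edge marginals, Markov's inequality for a single-tree tail bound, Karger's $n^{O(\alpha)}$ count on near-minimum cuts, and a union bound over independent samples. The one substantive difference is where you place the Markov threshold: the paper takes $k=c_1\alpha\log n$ so that the per-sample failure probability is $O(1/\log n)$, whereas you take $k=\Theta(\alpha)$ to push it below $1/2$ and then let the $2^{-s}$ term absorb the $n^{2\alpha}$ union bound. Both parameterizations yield $s=O(\alpha\log n+\log(1/\eta))$; your version has the minor advantage of showing (as you note) that $k=O(\alpha)$ already suffices and the stated $k=O(\alpha\log n)$ is slack, while the paper's version makes the per-sample failure vanish with $n$, which can be convenient downstream.
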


\begin{restatable}[Ensemble coverage $\Rightarrow$ per-cut certificates]{theorem}{ensemblecert}
\label{thm:ensemble-cert}
Let $\mathcal{T}=\{T_1,\dots,T_s\}$ be trees from Theorem~\ref{thm:k-resp-coverage} with
$k=c_1\alpha\log n$ and $s=c_2(\alpha\log n+\log(1/\eta))$. 
Compute $\Theta_k(T_i)$ exactly for each $i$. 
Then, with probability at least $1-\eta$, for \emph{every} cut $A$ with $w(\delta(A))\le \alpha\lambda$
there exists $i\in[s]$ such that
\[
\frac{|T_i\cap\delta(A)|}{w(\delta(A))}\ \le\ \Theta_k(T_i)\ \le\ \frac{k}{\lambda}
\ =\ O\!\Big(\tfrac{\alpha\log n}{\lambda}\Big).
\]
Thus the multiset $\{(T_i,\Theta_k(T_i))\}_{i=1}^s$ forms an explicit, verifiable \emph{ensemble certificate}
covering all $\alpha$-near-minimum cuts.
\end{restatable}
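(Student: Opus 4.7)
The plan is to chain together the coverage property from Theorem~\ref{thm:k-resp-coverage} with the exact evaluation of $\Theta_k$ from Theorem~\ref{thm:exact-eval} and the universal baseline bound~\eqref{eq:trivial-k-over-lambda}. The theorem is essentially a packaging result: once coverage is established and each $\Theta_k(T_i)$ is exactly computed, the per-cut certificates follow from the very definition of $\Theta_k$ together with the observation that a $k$-respecting cut of weight at least $\lambda$ contributes at most $k/\lambda$ to the thinness ratio.

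First I would invoke Theorem~\ref{thm:k-resp-coverage} with the stated parameters $s=c_2(\alpha\log n+\log(1/\eta))$ and $k=c_1\alpha\log n$, yielding a ``good event'' $\mathcal{E}$ of probability at least $1-\eta$ on which every cut $A$ with $w(\delta(A))\le\alpha\lambda$ satisfies $|T_{i(A)}\cap\delta(A)|\le k$ for some index $i(A)\in[s]$. Working on $\mathcal{E}$, I fix any such $A$ together with its witnessing tree $T_{i(A)}$. Because $A$ is $k$-respecting in $T_{i(A)}$, it lies in the feasible region of the maximization defining $\Theta_k(T_{i(A)})$, which immediately gives
\[
\frac{|T_{i(A)}\cap\delta(A)|}{w(\delta(A))}\ \le\ \Theta_k(T_{i(A)}).
\]
The universal bound~\eqref{eq:trivial-k-over-lambda} then yields $\Theta_k(T_{i(A)})\le k/\lambda$ for every tree, and substituting $k=c_1\alpha\log n$ produces the claimed $O((\alpha\log n)/\lambda)$ bound.

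Finally, I would justify the \emph{verifiability} of the ensemble certificate: Theorem~\ref{thm:exact-eval} guarantees that each $\Theta_k(T_i)$ is computed exactly in time $\tilde O(n^2+n^k)$ together with a witnessing cut $A_i^\star$, so the multiset $\{(T_i,\Theta_k(T_i),A_i^\star)\}_{i=1}^s$ is an explicitly computable, inspectable object. There is essentially no genuine obstacle beyond bookkeeping: the claim is a direct corollary of the two preceding theorems and the elementary bound~\eqref{eq:trivial-k-over-lambda}, and the randomness enters only through the single coverage event $\mathcal{E}$. The one subtle point I will have to be careful about is to apply the covering guarantee of Theorem~\ref{thm:k-resp-coverage} as a single global event over \emph{all} $\alpha$-near-minimum cuts at once (as that theorem states), rather than invoking it separately per cut; otherwise a naive union bound over the $n^{O(\alpha)}$ light cuts would have to be absorbed into the constant $c_2$ defining $s$, which is exactly what the coverage theorem already does internally.
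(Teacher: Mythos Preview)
Your proposal is correct and follows essentially the same approach as the paper's own proof: invoke the coverage event from Theorem~\ref{thm:k-resp-coverage}, use the definition of $\Theta_k$ to get the first inequality, and apply the baseline bound~\eqref{eq:trivial-k-over-lambda} for the second. Your explicit remark that the coverage guarantee must be treated as a single global event (rather than per cut) is a useful clarification that the paper's terse proof leaves implicit.
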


\begin{restatable}[End-to-end ensemble certification]{theorem}{endtoend}
\label{thm:end-to-end}
Fix $\alpha\ge 1$ and $\eta\in(0,1)$. Sampling as in Theorem~\ref{thm:k-resp-coverage} with
$k=c_1\alpha\log n$ and $s=c_2(\alpha\log n+\log(1/\eta))$, and computing each $\Theta_k(T_i)$ exactly, yields in time
\[
\tilde O\!\big(|E| \;+\; n^2 \;+\; s \cdot n^k\big)
\]
an explicit family $\{(T_i,\Theta_k(T_i))\}_{i=1}^s$ such that, with probability at least $1-\eta$,
every cut $A$ with $w(\delta(A))\le \alpha\lambda$ has a certified ratio
$|T_i\cap\delta(A)|/w(\delta(A))\le O((\alpha\log n)/\lambda)$ for some $i$.
(For $k=\Theta(\alpha\log n)$ this running time is quasi-polynomial in $n$; for fixed $k$ it is polynomial.)
\end{restatable}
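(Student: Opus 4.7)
The plan is to compose the three earlier results (Theorems~\ref{thm:k-resp-coverage}, \ref{thm:exact-eval}, and~\ref{thm:ensemble-cert}) into a single pipeline; the entire statement is essentially a running-time accounting on top of a coverage-then-evaluate scheme. First I would compute an $\varepsilon$-approximate fractional tree packing of $G$ in $\tilde O(|E|)$ time via a standard multiplicative-weights routine (Plotkin--Shmoys--Tardos/Young-style, or one of the near-linear tree-packing procedures used in modern min-cut algorithms). This yields the distribution $\mathcal{D}$ over spanning trees required by Theorem~\ref{thm:k-resp-coverage}; a constant $\varepsilon$ (e.g.\ $\varepsilon=1/10$) suffices, since any slack is absorbed into the constants $c_1,c_2$ and does not affect the asymptotics of $s$ or $k$.

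Next I would draw $s=c_2(\alpha\log n+\log(1/\eta))$ i.i.d.\ samples $T_1,\dots,T_s\sim\mathcal{D}$ and run the algorithm of Theorem~\ref{thm:exact-eval} on each with parameter $k=c_1\alpha\log n$. Each call produces the exact certificate value $\Theta_k(T_i)$ together with a witnessing cut, in $\tilde O(n^2+n^k)$ time and $O(n^2)$ space. Summing over the $s$ trees and adding the packing cost yields the stated budget
\[
\tilde O\!\big(|E|+n^2+s\cdot n^k\big),
\]
since the $s$-fold $O(n^2)$ preprocessing is absorbed into $s\cdot n^k$ for $k\ge 2$. For correctness, Theorem~\ref{thm:k-resp-coverage} guarantees that, except on an event of probability $\le\eta$, every cut $A$ with $w(\delta(A))\le\alpha\lambda$ is $k$-respecting in at least one $T_i$. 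For such an $i$, the definition of $\Theta_k$ together with the baseline bound~\eqref{eq:trivial-k-over-lambda} gives
\[
\frac{|T_i\cap\delta(A)|}{w(\delta(A))}\ \le\ \Theta_k(T_i)\ \le\ \frac{k}{\lambda}\ =\ O\!\big(\tfrac{\alpha\log n}{\lambda}\big),
\]
which is exactly the per-cut guarantee asserted by Theorem~\ref{thm:ensemble-cert} and hence the one claimed here.

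The main obstacle I anticipate is a modest bookkeeping issue rather than a conceptual one: I need to verify that an \emph{$\varepsilon$-approximate} (rather than exact) Nash--Williams--Tutte packing still satisfies the coverage hypothesis of Theorem~\ref{thm:k-resp-coverage} within the same asymptotic parameters. The expected argument is that an $\varepsilon$-approximate packing perturbs per-edge inclusion probabilities by a $(1\pm\varepsilon)$ factor, so the union bound over the $n^{O(\alpha)}$ near-minimum cuts supplied by Karger's cut-counting theorem goes through with the same $s$ and $k$ up to constants. The final quasi-polynomial-versus-polynomial remark in the statement then reduces to reading off $n^k$: for $k=\Theta(\alpha\log n)$ one has $n^k=n^{O(\alpha\log n)}$, while a fixed $k$ keeps the runtime genuinely polynomial.
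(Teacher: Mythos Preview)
Your proposal is correct and follows essentially the same approach as the paper: sample $s$ trees from the packing distribution of Theorem~\ref{thm:k-resp-coverage}, compute each $\Theta_k(T_i)$ exactly via Theorem~\ref{thm:exact-eval}, and invoke Theorem~\ref{thm:ensemble-cert} for the per-cut guarantee, with the running time obtained by summing the packing cost and the $s$ exact-evaluation calls. Your explicit handling of the $\varepsilon$-approximate packing (and the near-linear MWU construction) is a welcome clarification that the paper only addresses via a footnote, and your direct route to exact evaluation is if anything cleaner than the paper's detour through an approximate $\hat\Theta_k$ before re-evaluating exactly.
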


\subsection{Organization of the paper}

The remainder of the paper is organized as follows.  
In Section~\ref{sec:krespect}, we develop the algebra of
$k$-respecting cuts, proving Theorem~\ref{thm:krespect}, and
establishing the exact evaluation framework.
Section~\ref{sec:theta-def-eval} introduces the
$k$-thinness certificate $\Theta_k(T)$, gives algorithms for
its computation, and proves Theorem~\ref{thm:exact-eval}.
Sections~\ref{sec:coverage} and~\ref{sec:k-to-global} show how near-minimum cuts
are covered by sampled trees, and derive \emph{ensemble per-cut} guarantees,
proving Theorems~\ref{thm:k-resp-coverage}
and~\ref{thm:ensemble-cert}, respectively.
Section~\ref{sec:approx-and-local} combines these ingredients into
our end-to-end certification pipeline (Theorem~\ref{thm:end-to-end}) and
describes local improvement and approximate evaluation.
In Section~\ref{sec:special}, we analyze planar and bounded-genus
graphs, deriving sharper bounds via cut--cycle duality and dual girth.
We conclude in Section~\ref{sec:summary}, with a discussion of
open directions.

\section{Evaluation of $k$-Respecting Cuts}
\label{sec:krespect}

A central technical ingredient of our framework is the ability to evaluate
the weight of any cut that intersects a spanning tree $T$ in at most $k$ edges.
At first sight this seems to require scanning all edges of the graph,
but we show that it reduces to a purely \emph{tree--local} computation.
The key observation is that every $k$--respecting cut can be expressed
as the symmetric difference of at most $k$ descendant subtrees of $T$,
and that the cut weight is then determined entirely by
pairwise (2--respecting) statistics.

Formally, let $T$ be a rooted spanning tree of $G=(V,E,w)$.
For any vertex $v$, denote by $D_T(v)\subseteq V$ the set of descendants of $v$
(including $v$ itself).
For $u,v\in V$, define
\[
\sigma(u,v) \;:=\; w\!\left(\delta(D_T(u)\oplus D_T(v))\right),
\]
i.e.\ the weight of the cut induced by the symmetric difference of
the two descendant sets.
The following lemma gives a closed form for every $k$--respecting cut.

\begin{lemma}[Exact $k$--respecting cut evaluation from pairwise data]
\label{lem:krespect-eval}
Fix a rooted spanning tree $T$ and let $A\subseteq V$ be a cut
such that $|E(T)\cap \delta(A)|=k$.
Let $S=\{v_1,\dots,v_k\}$ be the child endpoints of the $k$ tree edges
crossed by $\delta(A)$, so that
\[
A \;=\; \bigoplus_{i=1}^k D_T(v_i).
\]
Then
\begin{equation}
\label{eq:inclusion-exclusion}
w(\delta(A)) \;=\;
\sum_{\ell=1}^k (-1)^{\ell-1} 2^{\ell-1}
  \sum_{\substack{I\subseteq [k]\\ |I|=\ell}}
  w\!\left( \bigcap_{i\in I} \delta(D_T(v_i)) \right).
\end{equation}
Moreover each term on the right--hand side is determined solely by the pairwise
values $\sigma(u,v)$ and the ancestor--descendant relations in $T$.
Hence, after $O(n^2)$ preprocessing of all $\sigma(u,v)$ values,
the quantity $w(\delta(A))$ can be evaluated in $O(k^2)$ time.
\end{lemma}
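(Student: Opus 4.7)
The plan is to derive identity~\eqref{eq:inclusion-exclusion} by a short algebraic manipulation, reduce every intersection on its right-hand side to pairwise data using the laminar structure of $T$, and then turn this collapse into the claimed $O(n^2)$/$O(k^2)$ resource bounds.

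For the identity itself, I would use that the cut map $A\mapsto\delta(A)$ is $\mathbb{F}_2$-linear, so an edge $e$ lies in $\delta(A)$ iff it lies in an odd number of the cuts $\delta(D_T(v_i))$. Writing $N(e)=\sum_i \mathbf{1}[e\in\delta(D_T(v_i))]$, the parity identity $\mathbf{1}[N\text{ odd}]=\tfrac{1}{2}(1-(-1)^N)$ combined with the product expansion $(-1)^{N(e)}=\prod_i(1-2\,\mathbf{1}[e\in\delta(D_T(v_i))])$, summed against the edge weights, yields~\eqref{eq:inclusion-exclusion} directly; the empty-subset term contributes the total weight and cancels, leaving the claimed alternating sum with powers of $2$ over nonempty $I$.

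The main step is the structural reduction of each $w\bigl(\bigcap_{i\in I}\delta(D_T(v_i))\bigr)$. Because $T$ is rooted, $\{D_T(v_i)\}$ is laminar, and an edge $e=(x,y)$ lies in $\delta(D_T(v_i))$ exactly when $v_i$ sits on the tree path $P(x,y)$ but is not $\mathrm{LCA}(x,y)$. Since a tree path is the union of two chains meeting at the LCA, $\bigcap_{i\in I}\delta(D_T(v_i))$ is nonempty only when $\{v_i:i\in I\}$ decomposes into at most two chains in the ancestor order, and in that case it equals the pairwise intersection of the deepest vertex of each chain (or, in the one-chain case, of its shallowest and deepest endpoints). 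Combined with the symmetric-difference identity $\delta(D_T(u))\oplus\delta(D_T(v))=\delta(D_T(u)\oplus D_T(v))$, which gives $w(\delta(D_T(u))\cap\delta(D_T(v)))=\tfrac{1}{2}\bigl(w(\delta(D_T(u)))+w(\delta(D_T(v)))-\sigma(u,v)\bigr)$, this realizes every term of~\eqref{eq:inclusion-exclusion} from the $\sigma$-table, the single-cut weights, and the ancestor/LCA relations of $T$.

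For the runtime, I would aggregate~\eqref{eq:inclusion-exclusion} by the collapsed pair $(v_a,v_b)$ and evaluate the resulting signed-binomial coefficient in closed form; the identity $\sum_{s=0}^{m}\binom{m}{s}(-2)^{s+1}=-2(-1)^m$ shows that each pair's net contribution is $2(-1)^{m_{ab}+1}c_{ab}$, where $c_{ab}$ is the pairwise intersection weight and $m_{ab}$ counts the $v_i$'s intermediate to $v_a$ and $v_b$ along the relevant chain(s). Preprocessing builds the $n\times n$ $\sigma$-table, the single-cut weights $\{w(\delta(D_T(v)))\}$, and Euler-tour/depth arrays in $O(n^2)$ time and space. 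Per evaluation, building the auxiliary (virtual) tree on $\{v_1,\dots,v_k\}$ and tabulating ordered depth information lets every $m_{ab}$ be read off in $O(1)$ after $O(k^2)$ preparation, and summing over $\binom{k}{2}$ pairs finishes in $O(k^2)$ time. The technical core---and the main obstacle---is the laminar case analysis of higher-order intersections: cleanly ruling out configurations with more than two chains and unambiguously identifying the collapsed endpoint pair. Once that classification is tight, both the closed form and the $O(k^2)$ runtime follow mechanically.
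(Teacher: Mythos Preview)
Your derivation of the inclusion--exclusion identity matches the paper's. The pairwise reduction diverges: the paper argues via an iterative four-case elimination (its pairwise-sufficiency lemma), repeatedly removing from $I$ any vertex $a$ that has both a descendant in $I$ and an element of $I$ outside $D_T(a)$, until the residue is a single chain (collapse to shallowest/deepest), all pairwise incomparable (zero for $|I|\ge3$), or a ``Y'' with one top ancestor over two incomparable descendants (also zero). You instead characterize directly via tree paths and then aggregate the entire inclusion--exclusion sum by collapsed pair, obtaining the explicit closed-form coefficient $2(-1)^{m_{ab}+1}$ per pair. This is a cleaner route to the $O(k^2)$ bound, which the paper only asserts in a remark without spelling out how the $2^k$ terms collapse; your aggregation makes that step fully explicit.

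One point to tighten: ``decomposes into at most two chains in the ancestor order'' is too permissive. The set $\{x^\top,x_1,x_2\}$ with $x^\top$ a common ancestor of the incomparable pair $x_1,x_2$ has Dilworth width $2$, yet $\bigcap_i\delta(D_T(x_i))=\emptyset$, since any edge in $\delta(D_T(x_1))\cap\delta(D_T(x_2))$ has both endpoints inside $D_T(x^\top)$. The correct condition is that the two chains have \emph{incomparable} tops, equivalently that all $v_i$ lie on a common tree path with none being its LCA; this is precisely the configuration the paper isolates as its Case~3. Your closed form survives this because such ``Y''-shaped subsets contribute zero on both sides and are never of the form $\{a,b\}$ together with a subset of the $v_i$'s intermediate to $(a,b)$ for any pair, so the aggregation remains a bijection on the nonvanishing terms; but the characterization you invoke to justify the collapse needs this refinement.
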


\begin{proof}
The expression of $A$ as a symmetric difference of $k$ subtrees
follows from rooting $T$: membership in $A$ flips whenever a
crossing edge of $\delta(A)$ is traversed, so the shore $A$
is obtained exactly by toggling the descendant sets of the $k$ child
vertices $v_1,\dots,v_k$.

Consider any edge $e=\{x,y\}\in E$.
The indicator $[e\in \delta(A)]$ equals the parity of the number of
subtrees among $\{D_T(v_i)\}$ that separate $x$ and $y$.
In other words,
\[
[e\in \delta(A)] \;=\; \bigoplus_{i=1}^k [e\in \delta(D_T(v_i))].
\]
Expanding the parity by inclusion--exclusion gives exactly the formula
\eqref{eq:inclusion-exclusion}, with coefficient $2^{\ell-1}$ for the
$\ell$-wise intersections: an edge is cut by an odd number of subtrees
iff it belongs to an odd number of the sets $\delta(D_T(v_i))$,
which is captured by alternating signs and powers of two.

It remains to argue that each term, in $\sum_{\substack{I\subseteq [k]\\ |I|=\ell\geq2}}
  w\!\left( \bigcap_{i\in I} \delta(D_T(v_i)) \right)$ of Eq \ref{eq:inclusion-exclusion}, for example,
$w(\delta(D_T(v_i))\cap \delta(D_T(v_j)))$ and other higher order terms are determined by the pairwise values $\sigma(\cdot,\cdot)$.
\end{proof}
\begin{lemma}[Pairwise sufficiency]\label{lem:pairwise-sufficiency}
Let $T$ be any spanning tree, $r_T$ be the root of $T$ and
$S=\{x_1,\ldots,x_k\}\subseteq V\setminus\{r_T\}$ with $k\ge2$.
Then either
\[
\sigma\bigl(D_T(x_1),\ldots,D_T(x_k)\bigr)=0,
\]
or there exist $p,q\in\{1,\ldots,k\}$ such that
\[
\sigma\bigl(D_T(x_1),\ldots,D_T(x_k)\bigr)
=\sigma\bigl(D_T(x_p),D_T(x_q)\bigr).
\]
Consequently, given all pairwise values
$\sigma\bigl(D_T(x_i),D_T(x_j)\bigr)$ and the ancestor relations among $S$,
the $k$-wise value $\sigma\bigl(D_T(x_1),\ldots,D_T(x_k)\bigr)$ is determined.
\end{lemma}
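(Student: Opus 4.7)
The generalized symbol $\sigma(D_T(x_1),\dots,D_T(x_k))$ is read here as the common intersection weight $w\bigl(\bigcap_{i=1}^k\delta(D_T(x_i))\bigr)$---the higher-order quantity whose pairwise reducibility is exactly what the inclusion--exclusion expansion in Lemma~\ref{lem:krespect-eval} requires. My plan is to classify which edges of $G$ can appear in this intersection and then to run a structural case split on how $S$ sits in $T$ so that either (i) the intersection collapses to a specific pairwise one, or (ii) it is empty.

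The incidence analysis is the starting point. A tree edge lies in $\delta(D_T(x))$ only when it is the parent edge of $x$; with $k\ge 2$ distinct non-root vertices, no tree edge can be common to all $k$ cuts. For a non-tree edge $e=\{u,v\}$ with $L=\mathrm{lca}_T(u,v)$, a direct check shows $e\in\delta(D_T(x))$ iff $x$ lies on the fundamental cycle of $e$ strictly below $L$, that is, on the tree path from $u$ to $L$ or from $v$ to $L$ with $L$ itself excluded. So the edges realising the $k$-wise intersection are precisely those non-tree edges whose two-armed cycle contains every element of $S$.

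The case split happens at $L^\star=\mathrm{lca}_T(S)$. Since any fundamental cycle has only two arms, the intersection can be nonempty only if $S$ itself fits inside two chains meeting at $L^\star$. I would verify the following three regimes. In case (a), $S$ is a single chain: set $x_p$ to be its shallowest and $x_q$ its deepest element; the nesting $D_T(x_p)\supseteq\cdots\supseteq D_T(x_q)$ forces every edge in $\delta(D_T(x_p))\cap\delta(D_T(x_q))$ to lie in every intermediate $\delta(D_T(x_i))$, giving $\bigcap_i\delta(D_T(x_i))=\delta(D_T(x_p))\cap\delta(D_T(x_q))$. In case (b), $L^\star\notin S$ and $S$ splits into two chains hanging off two distinct children of $L^\star$: take $x_p,x_q$ to be the deepest element of each chain; disjointness of the two subtrees together with chain-monotonicity inside each arm again yields the identity $\bigcap_i\delta(D_T(x_i))=\delta(D_T(x_p))\cap\delta(D_T(x_q))$. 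In case (c), otherwise---e.g.\ $S$ contains an antichain of size $\ge 3$, or $L^\star\in S$ while $S$ still spans multiple subtrees of $L^\star$---no single two-armed cycle can simultaneously contain all of $S$, so $\bigcap_i\delta(D_T(x_i))=\emptyset$ and the $\sigma(\cdot)=0$ clause holds.

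For the ``consequently'' claim, the pairwise intersection weight is recoverable from the $O(n^2)$ $\sigma$-table via the elementary identity
\[
w\bigl(\delta(D_T(u))\cap\delta(D_T(v))\bigr)=\tfrac12\Bigl(w(\delta(D_T(u)))+w(\delta(D_T(v)))-\sigma(u,v)\Bigr),
\]
and the ancestor data on $S$ is exactly what is needed to compute $L^\star$, decide which of the three regimes applies, and, in regimes (a)--(b), select the correct pair $(x_p,x_q)$ in $O(k)$ time. I expect case (b) to be the main source of friction in the write-up: both inclusions between $\bigcap_i\delta(D_T(x_i))$ and the extremal $\delta(D_T(x_p))\cap\delta(D_T(x_q))$ must be checked at the level of individual non-tree edges, carefully using subtree disjointness across the two arms together with monotonicity along each arm to show that intermediate elements $x_i$ sandwiched between $x_p$ (or $x_q$) and $L^\star$ neither exclude a counted edge nor admit a spurious one.
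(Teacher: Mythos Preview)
Your proposal is correct and arrives at the same conclusion as the paper, but via a genuinely different route. The paper's proof (Appendix~\ref{app:pairwise}) works purely set-theoretically from laminarity of the descendant family and uses four cases: (1) all $x_i$ pairwise disjoint $\Rightarrow$ empty; (2) all on one root--leaf path $\Rightarrow$ take shallowest and deepest; (3) a single top ancestor that is not path-aligned $\Rightarrow$ empty; (4) otherwise, an \emph{iterative elimination} step removes a ``sandwiched'' element $a$ (one with a strict descendant in $S$ and some $u\in S$ outside $D_T(a)$) without changing the intersection, reducing $k$ until one of cases (1)--(3) applies.

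Your approach instead uses the fundamental-cycle characterization $e\in\delta(D_T(x))\iff x\in U_e$ to see the intersection as the set of edges whose two-armed tree path covers all of $S$, and then classifies $S$ upfront by its shape relative to $L^\star=\operatorname{lca}(S)$: one chain, two chains below distinct children of $L^\star$, or neither. This lets you name the extremal pair \emph{directly} (shallowest/deepest in case~(a), the two deepest in case~(b)) without any recursion; your case~(b), which the paper reaches only after repeated eliminations, ends up selecting exactly the same pair. The gain is a cleaner, more geometric argument and an immediate $O(k)$ rule for selecting $(x_p,x_q)$; the paper's elimination, by contrast, is more modular and makes the inductive structure explicit. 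Your added identity $w(\delta(D_T(u))\cap\delta(D_T(v)))=\tfrac12\bigl(\tau(u)+\tau(v)-\sigma(u,v)\bigr)$ for the ``consequently'' clause is correct and makes explicit a step the paper leaves implicit. The only place to be careful in the write-up is the exhaustiveness of case~(c): you should spell out that if $L^\star\notin S$ and $S$ is not two chains below distinct children, then $S$ contains an antichain of size~$\ge 3$ (so no two-armed path can cover it), and separately handle $L^\star\in S$ with $S$ not a chain (which is exactly the paper's Case~3).
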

To prove the above lemma, we expand the parity condition ``an edge is cut by an odd number of subtrees''
into an inclusion--exclusion (IE) sum over intersections of boundary sets
$\{\delta(D_T(v_i))\}_{i=1}^k$.
Because descendant sets in a rooted tree are laminar, any nonempty
$\ell$-wise intersection corresponds to a simple ancestor chain among
the $\{v_i\}$ and, crucially, \emph{collapses to a pairwise term}:
either it is empty, or it equals $\delta(D_T(v_p))\cap\delta(D_T(v_q))$
for two extremal nodes $v_p,v_q$ on that chain.
This ``pairwise sufficiency'' reduces all higher-order contributions
to values determined by the table $\sigma(u,v)=w(\delta(D_T(u)\oplus D_T(v)))$.
The full four-case laminar analysis is deferred to
Appendix~\ref{app:pairwise}.

\begin{remark}[Per-cut evaluation time]
Using the pairwise primitives $\sigma(u,v)$ and the laminar reduction,
the weight of a fixed $k$-respecting cut induced by child endpoints $\{v_1,\dots,v_k\}$
is obtained by aggregating $O(k^2)$ pairwise terms $\sigma(v_i,v_j)$.
Thus the evaluation time is $O(k^2)$ (i.e., $O_k(1)$ for fixed $k$) after $O(n^2)$ preprocessing.
\end{remark}

\begin{lemma}[Weighted/capacitated extension]
\label{lem:weighted}
All formulas and algorithms above extend verbatim to weighted graphs
with nonnegative capacities $w:E\to\mathbb{R}_{\ge 0}$.
\end{lemma}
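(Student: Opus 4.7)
The plan is to observe that every step of the derivation in Section~\ref{sec:krespect} is \emph{linear in edge weights} and that all combinatorial arguments manipulate \emph{sets} of edges rather than counts. I would then verify, ingredient by ingredient---the inclusion--exclusion expansion of Lemma~\ref{lem:krespect-eval}, the pairwise sufficiency lemma, and the $O(n^2)$ preprocessing step---that nothing changes when $w$ is replaced by an arbitrary nonnegative capacity.

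First I would revisit Equation~\eqref{eq:inclusion-exclusion}. The identity is derived \emph{edge by edge}: for any edge $e=\{x,y\}$, the indicator $[e\in\delta(A)]$ equals the parity of how many of the sets $\delta(D_T(v_i))$ contain $e$, and this parity expands via the standard integer inclusion--exclusion
\[
[e\in\delta(A)]\;=\;\sum_{\ell=1}^{k}(-1)^{\ell-1}2^{\ell-1}\sum_{\substack{I\subseteq[k]\\|I|=\ell}}\prod_{i\in I}[e\in\delta(D_T(v_i))].
\]
This identity makes no reference to $w$. Multiplying both sides by $w(e)\ge 0$ and summing over $e\in E$ recovers Equation~\eqref{eq:inclusion-exclusion} with each $\ell$-wise intersection term now interpreted as the weighted sum $w\bigl(\bigcap_{i\in I}\delta(D_T(v_i))\bigr)=\sum_{e\in\bigcap_{i\in I}\delta(D_T(v_i))}w(e)$. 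Linearity of the sum over $E$ is the only property invoked.

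Next I would revisit the pairwise sufficiency argument of Lemma~\ref{lem:pairwise-sufficiency}. That step reasons purely about the laminar structure of descendant sets: for any nonempty $\ell$-wise intersection $\bigcap_{i\in I}\delta(D_T(v_i))$, the underlying \emph{edge set} either is empty or coincides with a single pairwise intersection $\delta(D_T(v_p))\cap\delta(D_T(v_q))$ for two extremal nodes on the induced ancestor chain. Such a set-level equality descends to any weight function by summation, so the table $\sigma(u,v)=w(\delta(D_T(u)\oplus D_T(v)))$ continues to determine every higher-order term \emph{verbatim}. On the algorithmic side, the $O(n^2)$ preprocessing still suffices because each $\sigma(u,v)$ is a weighted sum over a structured subset of edges and can be obtained by standard subtree weight-aggregation on $T$; the $O(k^2)$ per-cut evaluation is unaffected since it only aggregates precomputed values.

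There is essentially no obstacle in this lemma; the closest thing to a subtle point is that the inclusion--exclusion coefficients $(-1)^{\ell-1}2^{\ell-1}$ can be negative, so I would flag that the \emph{final} weighted sum is nonetheless nonnegative by construction (it equals $w(\delta(A))\ge 0$), and that no integrality, monotonicity, or special property of $w$ is used beyond nonnegativity of each $w(e)$. Consequently all formulas and algorithms of Section~\ref{sec:krespect}, including Theorem~\ref{thm:krespect}, apply verbatim in the weighted/capacitated setting.
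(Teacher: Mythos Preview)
Your proposal is correct and follows essentially the same approach as the paper: both argue that every identity is linear in the edge indicators $[e\in\cdot]$, so replacing each indicator by $w(e)\cdot[e\in\cdot]$ and summing preserves all formulas, with the preprocessing table simply storing weighted $\sigma(u,v)$ values and all running times unchanged. Your write-up is a more detailed, ingredient-by-ingredient expansion of the paper's three-sentence sketch, but the core observation is identical.
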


\begin{proof}
All identities are linear in the edge indicators $[e\in\cdot]$.
Replacement of each by $w(e)\cdot [e\in\cdot]$ preserves all arguments.
The preprocessing table simply stores weighted values of
$\sigma(u,v)$, and running time bounds are unchanged.
\end{proof}

\medskip

\textbf{Preprocessing for the $\boldsymbol{\sigma}$-table}
We require the values $\sigma(u,v)=w\!\bigl(\delta(D_T(u)\oplus D_T(v))\bigr)$ for all
$u,v\in V$.
We show that they can be computed in $O(m\log n+n^2)$ time (or $O(m+n^2)$ with an offline variant)
and $O(n^2)$ space, using standard Euler-tour/LCA primitives and batched path additions on $T$.
The construction and proof are deferred to Appendix~\ref{app:sigma}.

Lemma~\ref{lem:krespect-eval} shows that the weight of every $k$--respecting cut
is computable in constant time once the pairwise table is built.
This is remarkable because the total number of cuts is exponential,
yet the $k$--respecting family admits polynomially many certificates. This enables exact evaluation of any $k$-respecting cut from a quadratic-size table of pairwise statistics.
Combined with the local update rules of Section~4, it yields efficient computation of the exact $k$-thinness parameter $\Theta_k(T)$.

\section{$k$-Thinness: Certificates, Coverage, and Global Guarantee}
\label{sec:thinness}

\subsection{The $k$-thinness certificate and exact evaluation}
\label{sec:theta-def-eval}

We now turn from evaluation of individual cuts to certified thinness.
Recall the $k$-thinness parameter of a spanning tree $T$:
\[
\Theta_k(T) \;=\;
\max_{\substack{A\subseteq V\\ |E(T)\cap\delta(A)|\le k}}
  \frac{|E(T)\cap\delta(A)|}{w(\delta(A))}.
\]
This value certifies the worst thinness ratio among all
$k$-respecting cuts of $T$.
Our goal is to compute $\Theta_k(T)$ exactly (or approximately) and
to use it as a verifiable building block within an ensemble of sampled trees.

\begin{lemma}[Local update of the pairwise table]
\label{lem:swap}
Let $T' = T - e + f$ where $f=(x,y)\in E\setminus E(T)$ and $e\in P_T(x,y)$.
Let $C := P_T(x,y)$ be the fundamental cycle.
Then all pairwise values $\sigma_{T'}(u,v)$ that differ from $\sigma_T(u,v)$
have at least one of $u,v$ in the set of vertices whose ancestor relation
to $C$ changes between $T$ and $T'$.
Consequently the number of affected pairs is $O(|C|\cdot n)$ and they
can all be updated in $O(|C|\cdot n)$ time.
\end{lemma}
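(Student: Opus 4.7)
Since $\sigma_T(u,v)=w(\delta(D_T(u)\oplus D_T(v)))$ depends only on the descendant sets $D_T(u)$ and $D_T(v)$, any pair with $\sigma_T(u,v)\neq \sigma_{T'}(u,v)$ must have at least one endpoint in
\[
U \;:=\; \{\,v\in V : D_T(v)\neq D_{T'}(v)\,\}.
\]
The plan is to prove $U\subseteq V(C)$ (hence $|U|=O(|C|)$), which immediately yields both the ``ancestor relation to $C$ changes'' description and the $O(|C|\cdot n)$ bound on affected pairs.

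To characterize $U$, I would write $e=(a,b)$ with $b$ the child in $T$ and $f=(x,y)$ with $x$ in the component $T_1\ni r$ and $y$ in $T_2=D_T(b)$, so that $C=\{e,f\}\cup P_{T_1}(a,x)\cup P_{T_2}(b,y)$. I split into two cases. For $v\in T_1$, the structure of $T_1$ is unchanged, hence $D_{T'}(v)\cap T_1=D_T(v)\cap T_1$; the only possible difference is whether the block $T_2$ is included, which happens in $T$ iff $v$ is an ancestor of $a$ and in $T'$ iff $v$ is an ancestor of $x$ (since $T_2$ now hangs off $x$ via $f$). Hence $v\in U\cap T_1$ precisely on $P_{T_1}(a,x)$ minus its LCA. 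For $v\in T_2$, the subtree $T_2$ is re-rooted from $b$ to $y$; a short laminar argument shows that any $v$ off $P_{T_2}(b,y)$ has the same hanging subtree on the side away from the path regardless of the chosen root, so $U\cap T_2\subseteq V(P_{T_2}(b,y))$. Together these give $U\subseteq V(C)$.

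For the update, I plan to use a clean identity: for every $u\in U$ the difference $D_{T'}(u)\oplus D_T(u)$ is a symmetric difference of a constant number of $T$-descendant sets ($D_T(b)$ alone when $u\in T_1$; and $D_T(b)\oplus D_T(u)\oplus D_T(u^+)$ when $u$ lies on $P_{T_2}(b,y)$, with $u^+$ the next path-neighbor toward $y$ and the convention $D_T(u^+)=\emptyset$ if $u=y$). Consequently, for every affected pair $(u,v)$, the set $D_{T'}(u)\oplus D_{T'}(v)$ is a symmetric difference of $O(1)$ $T$-descendant sets, i.e.\ an $O(1)$-respecting cut in $T$, whose weight can be read off in $O(1)$ time via Lemma~\ref{lem:krespect-eval} using only the pre-existing table $\sigma_T$. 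Looping over the $O(|C|\cdot n)$ affected pairs gives the claimed $O(|C|\cdot n)$ total update time. The hardest step is the $T_2$-side of the characterization of $U$ together with the three-term $\Delta_u$-identity: re-rooting nominally shifts many descendant sets at once, and a careful laminar/case argument is needed to collapse the change to a short XOR of $D_T$-subtrees so that the update can be routed cleanly through Lemma~\ref{lem:krespect-eval}.
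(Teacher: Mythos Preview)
Your argument is correct, and the characterization of $U=\{v:D_T(v)\ne D_{T'}(v)\}\subseteq V(C)$ via the $T_1$/$T_2$ case split is precisely the idea the paper sketches (the paper's proof simply asserts ``only vertices whose parent relation lies on $C$ alter their descendant sets'' without the explicit re-rooting analysis you carry out). Where you genuinely diverge is in the update mechanism. The paper recomputes the affected entries by ``rerunning the path-contribution routine restricted to $C$'', i.e.\ it appeals to the incremental machinery behind the $\sigma$-table construction in Appendix~\ref{app:sigma}. You instead observe that $D_{T'}(u)$ is always a symmetric difference of at most two $T$-descendant sets (namely $D_T(u)\oplus D_T(b)$ on the $T_1$ side and $D_T(b)\oplus D_T(u^{+})$ on the $T_2$ side), so each new value $\sigma_{T'}(u,v)$ is the weight of an $O(1)$-respecting cut in the \emph{old} tree $T$ and can be read off in $O(1)$ time directly from the existing table $\sigma_T$ via Lemma~\ref{lem:krespect-eval}. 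This is a cleaner and more self-contained route: it uses only the paper's main evaluation lemma rather than the internals of the preprocessing, and it makes the $O(1)$-per-pair cost explicit. The paper's route avoids having to derive your $\Delta_u$ identity, but is correspondingly less transparent about why the per-pair cost stays bounded.
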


\begin{proof}
The only structural change is that edges on $C$ switch tree/non-tree status.
Hence only vertices whose parent relation lies on $C$ alter their descendant sets.
Therefore $D_T(u)\oplus D_T(v)$ differs from $D_{T'}(u)\oplus D_{T'}(v)$
iff one of $u,v$ has its ancestry modified by $C$.
There are $O(|C|)$ such vertices and $n$ possible partners, yielding
$O(|C|n)$ affected pairs.
Each can be recomputed by rerunning the path-contribution routine
restricted to $C$, taking total $O(|C|n)$ time.
\end{proof}

Lemma~\ref{lem:swap} implies that maintaining the pairwise table
$\sigma_T(\cdot,\cdot)$ across a single fundamental swap $T' = T-e+f$
only requires touching $O(|C|\cdot n)$ entries, where $C=P_T(x,y)$.
Concretely, let $A$ be the vertices whose ancestor relation to $C$
changes; then $|A|=O(|C|)$ and only pairs $(u,v)$ with $u\in A$ or
$v\in A$ may need recomputation. We update these by scanning all
$v\in V$ for each $u\in A$ (exploiting symmetry $\sigma_T(u,v)=
\sigma_T(v,u)$). With a bitset representation of descendant sets
(or Euler-tour in/out intervals batched in word-operations),
the work is $O((|C|\cdot n)/w)$ word operations, where $w$ is the
machine word size. Crucially, all other entries remain valid, so a
sequence of swaps incurs cost proportional to the sum of the
corresponding cycle lengths. 

\begin{proof}[Proof of Theorem \ref{thm:exact-eval}]
By Theorem~\ref{thm:krespect}, after computing all pairwise
values $\sigma(u,v)=w(\delta(D_T(u)\oplus D_T(v)))$ in $O(n^2)$ time
and space, the weight of any fixed $k$-respecting cut can be evaluated
in $O(k^2)$ time from these pairwise quantities. Hence, for constant $k$,
we can enumerate all $\sum_{t=1}^k \binom{n}{t}=O(n^k)$ candidate
$t$-respecting cuts (each specified by the $t$ child endpoints of the
tree edges it crosses), evaluate their ratios in $O(k^2)$ time, and keep
the maximum together with a witnessing cut $A^\star$. This yields the
$\tilde O(n^{2}+n^k)$ time and $O(n^2)$ space bounds.

For dynamics under a single swap $T' = T-e+f$, only descendant
relations of vertices whose parent edge lies on the fundamental cycle
$C=P_T(f)$ can change. Consequently, only $\sigma(u,v)$ with
$u$ or $v$ in that affected set (size $O(|C|)$) can change, i.e.,
$O(|C|\cdot n)$ table entries. Re-evaluation then needs to touch only
those $k$-tuples that include at least one affected vertex, which are
$O(|C|\cdot n^{k-1})$ many, and each is updated in $O(1)$ using the
pairwise table. Maintaining a running maximum gives amortized
$\tilde O(n^{k-1})$ update time.
\end{proof}

\medskip
\noindent
\textit{Remark.} The local-update primitive above is invoked
whenever we modify $T$ inside verification/search, and it is the step that yields the
$O(|C|\cdot n)$ update bound used in Theorem~\ref{thm:exact-eval} and later in
Theorem~\ref{thm:end-to-end}.

\subsection{Coverage of near-minimum cuts by sampled trees}
\label{sec:coverage}

We show that sampling a few trees from a sufficiently rich fractional packing
\emph{covers} all near-minimum cuts in the sense that each such cut is
$k$-respecting in at least one sampled tree. This is the combinatorial backbone
for our certificate-based pipeline.

\subsubsection{Setup and assumptions.}
Throughout this subsection $G=(V,E,w)$ is an undirected weighted graph with
global min-cut value $\lambda$. We assume access to a fractional spanning-tree
packing $\{(T_j,p_j)\}_j$ of total weight $P:=\sum_j p_j$ with capacity
constraints $\sum_{j:\,e\in T_j} p_j \le w(e)$ for all $e\in E$ and
$P \ge \lambda/2$.\footnote{A $(1-\varepsilon)$-approximate packing is also
sufficient; then replace $P\ge \lambda/2$ by $P\ge (1-\varepsilon)\lambda/2$,
which only changes constants.}

\begin{lemma}[Expected crossings under a fractional tree packing]\label{lem:packing}
Let $\{(T_j,p_j)\}_j$ be a fractional tree packing of total weight $P\ge \lambda/2$,
and draw $T$ from $\Pr[T=T_j]=p_j/P$.
Then for every cut $A\subseteq V$,
\[
\mathbb{E}\!\left[\,|T\cap\delta(A)|\,\right]
\;\le\;\frac{w(\delta(A))}{P}\;\le\;\frac{2\,w(\delta(A))}{\lambda}.
\]
In particular, for any $k\ge 1$,
\[
\Pr\!\left[\,|T\cap\delta(A)|>k\,\right]\;\le\;\frac{\mathbb{E}[\,|T\cap\delta(A)|\,]}{k}
\;=\;\frac{w(\delta(A))}{P\,k}\;\le\;\frac{2\,w(\delta(A))}{\lambda\,k}.
\]
\end{lemma}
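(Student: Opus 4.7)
The plan is to prove the expectation bound by a direct double-counting (Fubini) argument against the capacity constraints of the packing, and then obtain the tail bound as an immediate consequence of Markov's inequality.

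First, I would expand the expectation under the sampling distribution: since $\Pr[T=T_j] = p_j/P$, we have
\[
\mathbb{E}\!\left[|T\cap \delta(A)|\right] \;=\; \frac{1}{P}\sum_j p_j\,|T_j\cap\delta(A)| \;=\; \frac{1}{P}\sum_j p_j \sum_{e\in \delta(A)} \mathbf{1}[e\in T_j].
\]
Swapping the order of summation (which is the key move) gives
\[
\sum_j p_j\,|T_j\cap\delta(A)| \;=\; \sum_{e\in \delta(A)} \Bigl(\sum_{j:\,e\in T_j} p_j\Bigr).
\]
Next, I would apply the packing capacity constraint $\sum_{j:\,e\in T_j} p_j \le w(e)$, which is exactly the hypothesis defining a fractional tree packing. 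Summing $w(e)$ over $e\in\delta(A)$ yields $w(\delta(A))$, and dividing by $P$ gives the first inequality. Substituting $P\ge \lambda/2$ yields the second inequality.

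For the tail bound, I would just invoke Markov's inequality: since $|T\cap\delta(A)|$ is a nonnegative integer random variable, $\Pr[|T\cap\delta(A)|>k] \le \mathbb{E}[|T\cap\delta(A)|]/k$, and then plug in the bound just established.

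There is essentially no obstacle here: the only subtle point is making sure the capacity constraint is correctly invoked on a per-edge basis (not per cut), but this is built into the definition of a fractional tree packing that the lemma assumes. The footnote about the $(1-\varepsilon)$-approximate case is handled identically, since replacing the exact constraint by $\sum_{j:\,e\in T_j}p_j \le w(e)$ (still valid as long as the packing is feasible) and $P\ge (1-\varepsilon)\lambda/2$ only alters the constant in the final inequality.
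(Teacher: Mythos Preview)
Your proposal is correct and is essentially identical to the paper's own proof: both expand the expectation, swap the order of summation to sum over edges in $\delta(A)$, invoke the per-edge capacity constraint $\sum_{j:e\in T_j} p_j \le w(e)$, divide by $P$, substitute $P\ge\lambda/2$, and then apply Markov's inequality for the tail.
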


\begin{proof}
By linearity and packing capacity constraints,
\[
\mathbb{E}[|T\cap\delta(A)|]
= \frac{1}{P}\sum_{e\in\delta(A)}\!\sum_{j:\,e\in T_j}\!p_j
\;\le\;\frac{1}{P}\sum_{e\in\delta(A)} w(e)
=\frac{w(\delta(A))}{P}.
\]
Apply Markov's inequality for the tail bound; use $P\ge \lambda/2$.
\end{proof}

\begin{theorem}[Coverage via packing and cut counting]\label{thm:thm3}
Fix $\alpha\ge 1$ and $\eta\in(0,1)$. There exists a distribution $D$ over
spanning trees (obtained from the packing above) such that if we sample
\[
s \;=\; \Big\lceil c_2\big(\alpha\log n + \log(1/\eta)\big)\Big\rceil
\quad\text{trees i.i.d.\ from $D$ and set}\quad
k \;=\; \Big\lceil c_1\,\alpha\log n\Big\rceil,
\]
then with probability at least $1-\eta$, every cut $A$ with
$w(\delta(A))\le \alpha\lambda$ is $k$-respecting in at least one of the $s$ samples.
\end{theorem}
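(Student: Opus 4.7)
My approach is to combine the per-cut, per-tree tail bound of Lemma~\ref{lem:packing} with Karger's near-minimum cut counting, via a union bound over all $\alpha$-near-minimum cuts. The distribution $D$ is the one already implicit in the setup: sample $T=T_j$ with probability $p_j/P$ from the fractional packing.

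First I would fix a single cut $A$ with $w(\delta(A))\le \alpha\lambda$ and bound the probability that one tree $T\sim D$ fails to $k$-respect it. Lemma~\ref{lem:packing} gives
\[
\Pr\!\left[\,|T\cap \delta(A)|>k\,\right]\;\le\;\frac{2\,w(\delta(A))}{\lambda\, k}\;\le\;\frac{2\alpha}{k}.
\]
Setting $k=\lceil c_1\alpha\log n\rceil$ makes this failure probability at most $q:=2/(c_1\log n)$, which is below $1/2$ once $c_1\ge 4$. Since the $s$ trees are i.i.d., the probability that \emph{every} sampled tree fails to $k$-respect $A$ is at most $q^{s}\le 2^{-s}$.

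Next I would invoke Karger's counting theorem to bound the number of candidate cuts. The number of cuts of $G$ of weight at most $\alpha\lambda$ is $N\le n^{2\alpha}$; this extends to weighted capacities via the standard random-contraction proof, which is linear in capacities and so is unaffected by passing from multigraphs to nonnegative weights. Applying a union bound over the $N$ near-minimum cuts gives
\[
\Pr\!\left[\,\exists A:\ w(\delta(A))\le \alpha\lambda\ \text{and}\ A\ \text{is not}\ k\text{-respecting in any}\ T_i\,\right]\;\le\;N\cdot q^{s}\;\le\;n^{2\alpha}\cdot 2^{-s}.
\]
Choosing $s=\lceil c_2(\alpha\log n+\log(1/\eta))\rceil$ with $c_2$ a sufficiently large absolute constant bounds this by $\eta$, yielding the claimed coverage guarantee.

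The main subtlety, rather than an obstacle, is to apply Karger's cut-counting bound in the weighted regime with the correct exponent $O(\alpha)$; the cleanest route is the random-contraction proof in which a fixed $\alpha$-near-min cut survives with probability at least $n^{-2\alpha}$, an argument that is linear in edge capacities and thus weight-agnostic. A secondary subtlety is the dependence on $P\ge \lambda/2$ in Lemma~\ref{lem:packing}: with only a $(1-\varepsilon)$-approximate packing, $q$ is inflated by a factor of $(1-\varepsilon)^{-1}$, which is absorbed into the constants $c_1,c_2$. Finally, the stated $k=\Theta(\alpha\log n)$ is deliberately loose from the point of view of Markov's bound alone (one could even take $k=\Theta(\alpha)$ by paying with larger $s$), but this choice is exactly what later enables the downstream $k/\lambda$ certificate ratio in Theorem~\ref{thm:ensemble-cert}.
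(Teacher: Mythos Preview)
Your proof is correct and follows essentially the same route as the paper: fix a near-minimum cut, apply Lemma~\ref{lem:packing} (Markov) to get a per-tree failure probability $q=2\alpha/k$, raise to the $s$-th power for i.i.d.\ samples, and union-bound over Karger's $n^{2\alpha}$ near-minimum cuts. The only cosmetic difference is that you pass through the crude bound $q^s\le 2^{-s}$ (using $q\le 1/2$ for $c_1\ge 4$) whereas the paper keeps $q^s=(2/(c_1\log n))^s$ explicitly; both lead to the same asymptotic $s=\Theta(\alpha\log n+\log(1/\eta))$, with slightly different admissible constants.
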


\begin{proof}
Fix a cut $A$ with $w(\delta(A))\le \alpha\lambda$. By Lemma~\ref{lem:packing},
\[
\Pr_{T\sim D}\!\big[\,|T\cap\delta(A)|>k\,\big] \;\le\; \frac{2\alpha}{k}.
\]
With $k=c_1\alpha\log n$ this is at most $(2/c_1)\cdot \frac{1}{\log n}$. Thus
the failure probability that all $s$ i.i.d.\ samples violate the $k$-respecting
property is at most $\big((2/c_1)\cdot \frac{1}{\log n}\big)^{s}$.

By Karger’s cut-counting bound, the number of cuts of weight at most $\alpha\lambda$
is at most $n^{2\alpha}$. Taking a union bound over these cuts, the total failure
probability is at most
\[
n^{2\alpha}\cdot \Big(\tfrac{2}{c_1\log n}\Big)^{s}
\;\le\; \eta
\]
provided $c_1\ge 4$ and
$s \ge c_2(\alpha\log n + \log(1/\eta))$ for a suitable absolute constant $c_2$
(e.g., $c_2=3$ suffices). This yields the claim.
\end{proof}

\medskip
\noindent\textbf{Parameter snapshot.}
A convenient concrete choice is
\[
k=\big\lceil 4\,\alpha\log n\big\rceil,\qquad
s=\big\lceil 3\big(\alpha\log n+\log(1/\eta)\big)\big\rceil.
\]

\medskip
\noindent\textbf{Why this suffices for \emph{per-cut} certification.}
Theorem~\ref{thm:thm3} ensures that every $\alpha$-near-minimum cut is $k$-respecting
in at least one sampled tree. Evaluating the exact $k$-certificate $\Theta_k(\cdot)$
for each sampled tree then yields a verifiable bound for \emph{each} such cut (see §\ref{sec:k-to-global}).

\subsection{$k$-coverage implies per-cut certificates (ensemble guarantee)}
\label{sec:k-to-global}
The previous subsection showed that from a small random sample of trees, every near-minimum cut is guaranteed to be $k$-respecting in at least one of the sampled trees (Theorem~\ref{thm:thm3}). We now formalize how this coverage yields \emph{ensemble, per-cut} guarantees once we evaluate the $k$-thinness parameter $\Theta_k(T)$ for each sampled tree.

\ensemblecert*

\begin{proof}
Let $A$ be any cut with $w(\delta(A))\le \alpha\lambda$. By Theorem~\ref{thm:thm3}, with probability at least $1-\eta$ over the sampling of $\mathcal{T}$ there exists $i$ such that $A$ is $k$-respecting in $T_i$. By definition of $\Theta_k(T_i)$,
\[
\frac{|T_i\cap\delta(A)|}{w(\delta(A))} \;\le\; \Theta_k(T_i).
\]
Finally, for any nonempty cut $w(\delta(A))\ge \lambda$ and $|T_i\cap\delta(A)|\le k$, so $\Theta_k(T_i)\le k/\lambda$, giving the stated $O((\alpha\log n)/\lambda)$ bound when $k=\Theta(\alpha\log n)$. The statement follows after computing all $\Theta_k(T_i)$ exactly.
\end{proof}

\subsection{Approximate evaluation and certified local improvement}
\label{sec:approx-and-local}

\subsubsection{Assumptions.}
We work with a rooted spanning tree $T$, assuming standard LCA/ancestor metadata is precomputed.
Let $s$ denote the number of sampled trees used in coverage arguments (Theorem~\ref{thm:k-resp-coverage}).
The exact $k$-respecting oracle of Theorem~\ref{thm:exact-eval} returns $\Theta_k(T)$ and a witnessing $k$-respecting cut attaining it.
When we use approximation below, \emph{all} edges are sampled independently (or via Poissonization) with normalization/stratification as stated.

\medskip
\noindent\textbf{Local improvement (certified descent first).}
We consider 1-edge swaps $T' = T-e+f$ with $f\in E\setminus E(T)$ and $e\in P_T(f)$, and define $\mathrm{Score}(T):=\Theta_k(T)$ computed by the exact oracle (Theorem~\ref{thm:exact-eval}).
We accept a swap only if it admits a \emph{certified} improvement and we break ties by a fixed total order on pairs $(e,f)$.

\begin{lemma}[Monotone local improvement]
\label{lem:local}
If a swap $T' = T-e+f$ satisfies $\mathrm{Score}(T') < \mathrm{Score}(T)$, then accepting it yields a strictly decreasing sequence $\mathrm{Score}(T^{(0)})>\mathrm{Score}(T^{(1)})>\cdots$.
Consequently, any sequence of such certified swaps terminates at a 1-swap local optimum.
When approximate screening is used (see below), we validate any tentative improvement with the \emph{exact} oracle (Theorem~\ref{thm:exact-eval}) before committing; monotonicity therefore holds unchanged.
\end{lemma}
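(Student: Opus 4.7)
The plan is to derive the statement as an almost immediate consequence of the acceptance rule combined with the finiteness of the state space. First I would recall that $\mathrm{Score}(T) = \Theta_k(T)$ is a scalar function on the (finite) set of spanning trees of $G$, computed exactly by the oracle of Theorem~\ref{thm:exact-eval}; in particular, its value at each visited $T^{(i)}$ is deterministic rather than noisy, which is the property that makes the ``strictly decreasing'' requirement in the hypothesis well-posed.

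The core step is a one-line induction: the acceptance rule commits to a swap $T^{(i)} \to T^{(i+1)}$ only when $\mathrm{Score}(T^{(i+1)}) < \mathrm{Score}(T^{(i)})$, so by induction on $i$ we obtain $\mathrm{Score}(T^{(0)}) > \mathrm{Score}(T^{(1)}) > \cdots$. Strict decrease forbids any revisit of a previously seen tree, and since the number of spanning trees of $G$ is finite, the sequence must terminate after finitely many steps. At termination, no single-edge swap produces a strictly smaller Score, which is precisely the definition of a 1-swap local optimum. The fixed tie-breaking order on pairs $(e,f)$ plays no role in monotonicity or termination; it only serves to make the descent path deterministic.

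For the approximate-screening refinement the key observation is that approximate evaluation is used only to \emph{propose} a candidate swap $(e,f)$; before committing, the candidate is re-scored by the exact oracle of Theorem~\ref{thm:exact-eval} and accepted only if the exact value strictly decreases. Consequently every committed swap satisfies the same strict-decrease condition as in the exact case, and the previous induction/termination argument goes through verbatim, with the noisy screening merely serving to reduce the number of exact oracle calls rather than to relax the acceptance criterion.

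The main obstacle here is conceptual rather than technical: the statement is essentially a tautology, so the real care is in being precise about the split between the proposal phase (which may use noisy approximations) and the commit phase (which must use the exact oracle of Theorem~\ref{thm:exact-eval}). A natural strengthening---bounding the number of iterations rather than merely asserting termination---would require quantifying the granularity of $\Theta_k(\cdot)$-values, e.g., by noting that they lie in the finite rational family $\{t/W : 1\le t\le k,\; W=w(\delta(A)) \text{ for some } \emptyset\subsetneq A\subsetneq V\}$; but the lemma as stated asks only for termination, so this quantitative question can be deferred.
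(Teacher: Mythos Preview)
Your proposal is correct and matches the paper's own proof essentially line for line: both treat $\Theta_k(T)$ as a potential that strictly decreases on every committed swap, invoke finiteness of the set of spanning trees for termination, and observe that approximate screening never weakens the acceptance criterion because each tentative swap is re-validated with the exact oracle before being committed. The additional remarks you make (about tie-breaking being irrelevant to monotonicity, and about the possibility of a quantitative iteration bound) are accurate side comments but not needed for the lemma as stated.
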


\begin{proof}
With exact evaluation, $\Phi(T):=\Theta_k(T)$ is a potential that strictly decreases on accepted swaps.
Since the set of spanning trees is finite, termination follows.
If approximate screening proposes a swap, we compute $\Theta_k(T')$ exactly before acceptance; hence only strictly improving swaps are ever committed.
\end{proof}

\subsubsection{A certified-descent screening rule}
When screening with approximation, the oracle returns an interval
\[
\underline\Theta_k(T)\ :=\ \hat\Theta_k(T)-C_k\varepsilon,\qquad
\overline\Theta_k(T)\ :=\ \hat\Theta_k(T)+C_k\varepsilon,
\]
and accepts a swap $T\to T'$ only if $\overline\Theta_k(T') < \underline\Theta_k(T)$.
Any accepted swap is then re-validated with the \emph{exact} oracle (Theorem~\ref{thm:exact-eval}); the published score and witness cut are therefore exact.

\medskip
\noindent\textbf{Approximate evaluation (pairwise table).}
We approximate the pairwise statistics $\sigma_T(u,v):=w(\delta(D_T(u)\oplus D_T(v)))$ for all $u,v\in V$.

\begin{lemma}[Approximate pairwise table]\label{lem:approx-sigma}
Fix $\varepsilon,\delta\in(0,1)$. Independently sample each edge $e\in E$ with probability
$p=\min\!\left\{1,\;c\,\varepsilon^{-2}\frac{\log(n/\delta)}{|E|}\right\}$,
and assign weight $\tilde w(e)=w(e)/p$ if sampled, else $0$.
Using an $O(\log n)$-time root-to-node update primitive on $T$ (e.g.\ HLD+LCA),
one can compute estimates $\hat\sigma(u,v)$ for all pairs $(u,v)$ in total time
\[
\tilde O\!\Big(|E|\,\varepsilon^{-2}\log(1/\delta)\;+\;n^2\Big)
\]
such that, with probability at least $1-\delta$ simultaneously for all $(u,v)$,
\[
(1-\varepsilon)\,\sigma(u,v)\ \le\ \hat\sigma(u,v)\ \le\ (1+\varepsilon)\,\sigma(u,v).
\]
\end{lemma}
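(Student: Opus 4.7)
The plan is to establish the two claims of the lemma---simultaneous $(1\pm\varepsilon)$-multiplicative preservation of all $\sigma(u,v)$ and the $\tilde O(|E|\varepsilon^{-2}\log(1/\delta)+n^2)$ running time---by (a) a per-pair Bernstein-type concentration combined with a union bound over $O(n^2)$ pairs, and (b) a batched path-additions implementation on $T$ that touches only the sampled edge set. I would start from unbiasedness: writing $A_{uv}:=D_T(u)\oplus D_T(v)$ and $X_e\sim\mathrm{Bernoulli}(p)$ independently across $e$, the estimator $\hat\sigma(u,v)=\sum_{e\in\delta(A_{uv})}\tilde w(e)=\sum_{e\in\delta(A_{uv})}(w(e)/p)X_e$ satisfies $\mathbb{E}[\hat\sigma(u,v)]=\sigma(u,v)$ by linearity, with each summand supported on $[0,w(e)/p]$ and variance at most $w(e)^2/p$.

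For the per-pair tail I would invoke Bernstein's inequality on the sum of independent nonnegative bounded variables to get $\Pr[|\hat\sigma(u,v)-\sigma(u,v)|>\varepsilon\,\sigma(u,v)]\le 2\exp\!\bigl(-\Omega(\varepsilon^2\,\sigma(u,v)\,p/w_{\max}^{(uv)})\bigr)$, where $w_{\max}^{(uv)}=\max_{e\in\delta(A_{uv})}w(e)$. Plugging in $p=c\varepsilon^{-2}\log(n/\delta)/|E|$ and choosing the absolute constant $c$ sufficiently large makes this at most $\delta/n^{2}$ provided the effective-sample scale $p\cdot\sigma(u,v)/w_{\max}^{(uv)}$ dominates $\log(n/\delta)/\varepsilon^{2}$. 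A union bound over the $\binom{n}{2}$ pairs then gives the simultaneous $(1\pm\varepsilon)$ guarantee with failure probability at most $\delta$.

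For the algorithm I would reuse the batched path-contribution primitive of Appendix~\ref{app:sigma} restricted to the sampled subgraph: sample edges first (expected size $\tilde O(|E|\cdot p)$, with the $\min$ clause handling the degenerate small-$|E|$ regime), then for each sampled non-tree edge $e=(x,y)$ deposit $\tilde w(e)=w(e)/p$ on the root-to-node chains of $x,y$ via HLD+LCA, so that subtree-sum queries at a vertex $u$ recover the exact contribution of the sampled edges to $\sigma(u,\cdot)$ and an LCA test adjusts for the symmetric-difference structure defining $A_{uv}$. Each sampled edge costs $O(\log n)$ time for the path update, so the sampling and update phase runs in $\tilde O(|E|\,p\log n)=\tilde O(|E|\,\varepsilon^{-2}\log(1/\delta))$, and reading out all table entries takes $O(n^{2})$, matching the stated total.

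The hard part will be matching the Bernstein tail to the advertised uniform $p$ across all pairs simultaneously in the weighted regime. Bernstein delivers the required $\delta/n^{2}$ tail only when $\sigma(u,v)/w_{\max}^{(uv)}$ is $\Omega(|E|/\log(n/\delta))$; if some pairwise cut is light (close to $\lambda$) while an incident edge is heavy, the stated $p$ is insufficient on its own. The clean repair, which I would fold into the proof, is to apply the argument after rescaling edge sampling probabilities by a connectivity/strength lower bound (so that the uniform-$p$ analysis applies to a normalized instance), or equivalently to state the bound conditional on $w_{\max}/\sigma_{\min}=O(|E|/\log(n/\delta))$; both preserve the advertised runtime up to $\mathrm{polylog}$ factors and make the per-pair Bernstein exponent uniformly large enough for the union bound to close.
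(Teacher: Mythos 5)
Your proposal takes essentially the same route as the paper's own proof: unbiased edge sampling with a uniform rate, a per-pair concentration bound, a union bound over $O(n^2)$ pairs, and a batched path-update implementation using HLD/LCA to populate $\hat\sigma$ in the stated time. The paper's proof sketch is terser but identical in skeleton---it cites ``bounded-difference or Chernoff bounds'' where you invoke Bernstein, both of which give the same effective exponent for sums of independent bounded summands.

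What you do better is name the weak point explicitly. The lemma's uniform $p$ does \emph{not} deliver multiplicative concentration for a pairwise cut $\delta(D_T(u)\oplus D_T(v))$ whose weight is dominated by a single heavy edge: if $\sigma(u,v)$ is concentrated on an edge $e$ with $w(e)=\sigma(u,v)$ and $e$ is missed (probability $1-p$), the estimate is $0$, not $(1-\varepsilon)\sigma(u,v)$. Your Bernstein exponent $\Omega(\varepsilon^2 \sigma(u,v)\,p / w_{\max}^{(uv)})$ captures precisely this---it is small exactly when the cut-to-max-edge ratio is small---and so the stated uniform $p$ cannot close the union bound for all pairs in a general weighted graph. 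The paper's own proof acknowledges this only in passing via ``dyadic weight-stratification if needed'' and ``median-of-means,'' without elaboration; your repair by strength-based (or connectivity-based) rescaling of sampling probabilities is a different but equally standard fix (in the spirit of Bencz\'ur--Karger sparsification, which nonuniformly oversamples low-strength edges). Either route works and either must actually be carried out: as written, both your proof and the paper's leave the same step to the reader, and the lemma as stated is false for unconstrained weights under uniform sampling. You would complete the argument either by (i) switching to nonuniform $p_e$ proportional to $w(e)$ over an edge-strength estimate, which makes the Bernstein exponent uniformly $\Omega(\varepsilon^{-2}\log(n/\delta)\cdot\varepsilon^2)$ per pair and hence union-boundable, or (ii) formally inserting the dyadic stratification so that within each weight class the variance is $O(w_{\max}\sigma)$ with $w_{\max}$ bounded by twice the class floor, then summing classes.
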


\begin{proof}
Sample each edge $e=\{a,b\}\in E$ independently with probability
$p=\Theta(\varepsilon^{-2}\log(n/\delta)/|E|)$ and assign rescaled weight
$\tilde w(e)=w(e)/p$ if sampled, else $\tilde w(e)=0$.
For a sampled edge $(a,b)$, traverse $a\leadsto \mathrm{lca}(a,b)$ and $b\leadsto \mathrm{lca}(a,b)$ in $T$ and perform the standard path-difference updates to the descendant-indicator counters that realize $\sigma_T(u,v)$; summing over sampled edges yields $\hat\sigma$.
Unbiasedness is immediate.
Concentration follows from bounded-difference or Chernoff bounds (with dyadic weight-stratification if needed); a union bound over $O(n^2)$ pairs (boosted by a constant-round median-of-means) gives the simultaneous guarantee.
The total work is $\tilde O(|E|\varepsilon^{-2}\log(1/\delta))$ for updates plus $O(n^2)$ to materialize/store all pair estimates.
\end{proof}

\medskip
\noindent\textbf{From pairwise approximation to $\Theta_k$.}
For a $k$-respecting descendant cut $A=\bigoplus_{i=1}^k D_T(x_i)$, Lemma~3.2 expresses $w(\delta(A))$ as a signed linear combination of singleton/pairwise primitives with nonnegative coefficients after laminarity reductions.
Writing the inclusion--exclusion over nonempty $I\subseteq[k]$ yields absolute-coefficient mass
\begin{equation}\label{eq:coef-mass}
C_k \;=\;\sum_{\ell=1}^{k} \binom{k}{\ell}\,2^{\ell-1}
\;=\;\frac{3^{k}-1}{2}.
\end{equation}

\begin{lemma}[Approximate evaluation of $\Theta_k$]
\label{lem:approx-theta}
Using the estimates $\hat\sigma$ from Lemma~\ref{lem:approx-sigma}, the oracle’s value $\hat w(\delta(A))$ for any $k$-tuple satisfies
\begin{equation}\label{eq:theta-relerr}
(1 - C_k\varepsilon)\, w(\delta(A)) \ \le\  \hat w(\delta(A)) \ \le\ (1 + C_k\varepsilon)\, w(\delta(A)).
\end{equation}
Consequently,
\begin{equation}\label{eq:Theta-approx}
(1 - C_k\varepsilon)\,\Theta_k(T) \ \le\ \hat\Theta_k(T)\ \le\ (1 + C_k\varepsilon)\,\Theta_k(T).
\end{equation}
For readability, one may write $C_k\varepsilon=O(3^{k}\varepsilon)$; the looser $(1\pm O(2^{k}\varepsilon))$ bound also holds.
\end{lemma}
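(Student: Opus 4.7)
\medskip

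My plan is to propagate the per-pair multiplicative error from Lemma~\ref{lem:approx-sigma} through the signed inclusion--exclusion of Lemma~\ref{lem:krespect-eval}, then convert the resulting per-cut bound into a bound on $\Theta_k$ by a monotonicity argument on the max over $k$-tuples.

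First, I would fix a $k$-respecting cut $A=\bigoplus_{i=1}^k D_T(v_i)$ and invoke Lemma~\ref{lem:krespect-eval} to write
\[
w(\delta(A)) \;=\; \sum_{\ell=1}^{k}(-1)^{\ell-1} 2^{\ell-1}\sum_{|I|=\ell} w\!\left(\bigcap_{i\in I}\delta(D_T(v_i))\right),
\]
and then apply the pairwise collapse of Lemma~\ref{lem:pairwise-sufficiency} to identify every surviving $\ell$-wise intersection with either $0$ or a single pairwise value $\sigma(v_p,v_q)$ determined by the laminar ancestor chain. Substituting $\hat\sigma$ for $\sigma$ throughout, the estimator $\hat w(\delta(A))$ is, by the linearity of the IE formula, the same signed linear combination of $\hat\sigma$-values. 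This realizes $\hat w(\delta(A))$ as $\sum_j c_j\,\hat\sigma_j$ with signed coefficients $c_j$ inherited from the IE, whose absolute sum I would then bound.

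Second, the per-pair guarantee $|\hat\sigma_j-\sigma_j|\le\varepsilon\,\sigma_j$ together with the triangle inequality yields
\[
\bigl|\hat w(\delta(A))-w(\delta(A))\bigr|\ \le\ \varepsilon\sum_j |c_j|\,\sigma_j,
\]
where the coefficient mass satisfies $\sum_{\ell=1}^k\binom{k}{\ell}2^{\ell-1}=(3^k-1)/2=C_k$, matching~\eqref{eq:coef-mass}. Turning this into the relative bound~\eqref{eq:theta-relerr} is the main obstacle: I need to argue that $\sum_j|c_j|\,\sigma_j \le C_k\, w(\delta(A))$, even though individual intersection contributions can in principle inflate under sign cancellations in the IE. The intended route is to use the pairwise-sufficiency collapse to show that every surviving term is either zero or a cut value dominated by $w(\delta(A))$ along the same laminar chain (so each $\sigma_j$ is bounded by $w(\delta(A))$), leaving the absolute-coefficient mass $C_k$ as the only $k$-dependent amplification factor. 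An alternative that I would try if this dominance step is too strong is to reroute the estimator through the direct sampled quantity $\tilde w(\delta(A))$, which equals $\sum_j c_j\hat\sigma_j$ by linearity and concentrates around $w(\delta(A))$ by a Chernoff bound, bypassing the IE blow-up entirely.

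Third, to derive~\eqref{eq:Theta-approx} I would exploit that $\Theta_k(T)$ and $\hat\Theta_k(T)$ are maxima of the ratio $|E(T)\cap\delta(A)|/w(\delta(A))$ over the same finite family of $k$-respecting cuts, with exact numerators and an approximated denominator. From~\eqref{eq:theta-relerr} each pointwise ratio is preserved up to the factor $(1\pm C_k\varepsilon)$ (using $(1\pm C_k\varepsilon)^{-1}=1\mp C_k\varepsilon+O((C_k\varepsilon)^2)$ and absorbing lower-order terms for $C_k\varepsilon$ small), and monotonicity of $\max$ transfers the same factor to $\Theta_k$. Finally, the asymptotic simplifications $C_k=O(3^k)$ and the coarser $O(2^k\varepsilon)$ bound follow immediately from $\sum_\ell\binom{k}{\ell}2^{\ell-1}\le 3^k$ and from the fact that after pairwise collapse the number of \emph{distinct} surviving pairwise slots is at most $\binom{k}{2}+k=O(k^2)\le 2^k$, each contributing a factor of at most $2\varepsilon$ in the worst case.
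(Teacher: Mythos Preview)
Your high-level strategy---propagate the per-primitive $(1\pm\varepsilon)$ error through the linear inclusion--exclusion and then use monotonicity of the max---matches the paper's. The divergence is exactly at the step you flag as ``the main obstacle,'' namely converting $|\hat w-w|\le \varepsilon\sum_j|c_j|\sigma_j$ into a bound relative to $w(\delta(A))$.

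Your proposed resolution, that each surviving primitive $\sigma_j$ is dominated by $w(\delta(A))$ along the laminar chain, is false. Take $T$ a star rooted at $a$ on $\{a,b,c,d\}$ with unit-weight tree edges and non-tree edges $bc,bd,cd$ each of weight $M$. For $k=3$ with $\{v_1,v_2,v_3\}=\{b,c,d\}$ (pairwise disjoint descendant sets), one has $A=\{b,c,d\}$ and $w(\delta(A))=3$, while the singleton primitive $\tau(b)=w(\delta(D_T(b)))=1+2M$ appears in the reduced expression and can be made arbitrarily large. So ``each $\sigma_j\le w(\delta(A))$'' cannot hold, and multiplying the coefficient mass $C_k$ by that bound is not a valid route to~\eqref{eq:theta-relerr}. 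Your fallback of routing through the directly sampled cut value $\tilde w(\delta(A))$ is sound in spirit, but it does not follow from Lemma~\ref{lem:approx-sigma} as stated: that lemma only union-bounds over $O(n^2)$ pairs, whereas you would need concentration simultaneously for all $O(n^k)$ $k$-respecting cuts, i.e., a different sampling rate and a separate argument. Finally, your justification of the looser $O(2^k\varepsilon)$ bound via ``at most $\binom{k}{2}+k$ distinct pairwise slots'' confuses the number of primitives with the magnitude of the (signed) coefficients attached to them; the relevant quantity is coefficient mass, not slot count.

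The paper closes the gap differently: it asserts (in the sentence just before the lemma) that \emph{after} the laminarity reductions of Lemma~\ref{lem:pairwise-sufficiency} the resulting linear form in the $\sigma$-primitives has \emph{nonnegative} coefficients. Granting that, if $w(\delta(A))=\sum_j c_j\sigma_j$ with $c_j\ge 0$ and each $\hat\sigma_j\in[(1-\varepsilon)\sigma_j,(1+\varepsilon)\sigma_j]$, then $|\hat w-w|\le \varepsilon\sum_j c_j\sigma_j=\varepsilon\,w(\delta(A))$ directly, and the stated $C_k\varepsilon$ is simply a (very loose) envelope. This bypasses your obstacle entirely: one never needs individual $\sigma_j$ to be small, only that no cancellation occurs in the signed combination. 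That nonnegativity claim is the ingredient missing from your plan.
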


\begin{proof}
Each primitive is approximated within $(1\pm\varepsilon)$.
By linearity and nonnegativity of the reduction coefficients, the total relative error is at most $C_k\varepsilon$, giving \eqref{eq:theta-relerr}.
Maximizing over $k$-respecting cuts preserves the envelope, giving \eqref{eq:Theta-approx}.
\end{proof}

\medskip
\noindent\textbf{End-to-end guarantee.}
We now combine sampling, certified local improvement, and approximate screening.

\endtoend*

\begin{proof}
Sample $s$ trees from the packing distribution of Theorem~\ref{thm:k-resp-coverage} with $k=c_1\alpha\log n$ and $s=c_2(\alpha\log n+\log(1/\eta))$.
For each $T_i$, build $\hat\sigma$ via Lemma~\ref{lem:approx-sigma} and evaluate $\hat\Theta_k(T_i)$ via Lemma~\ref{lem:approx-theta}.
Return the explicit family $\{(T_i,\Theta_k(T_i))\}_{i=1}^s$ after re-evaluating each $\Theta_k(T_i)$ \emph{exactly} using Theorem~\ref{thm:exact-eval}.
By Theorem~\ref{thm:ensemble-cert} and \eqref{eq:Theta-approx}, with probability at least $1-\eta$ this family forms a verifiable \emph{ensemble certificate} covering all cuts of weight $\le \alpha\lambda$, with per-cut ratio $O((\alpha\log n)/\lambda)$ for some $i$.
The running time is $\tilde O(|E|+n^2+s\cdot n^k)$; for $k=\Theta(\alpha\log n)$ this is quasi-polynomial, while for fixed $k$ it is polynomial.
\end{proof}

\begin{algorithm}[h]
\caption{Thin-Search$(G,k,B)$}
\label{alg:thinsearch}
\KwIn{Graph $G=(V,E,w)$, parameter $k$, iteration budget $B$}
\KwOut{Spanning tree $T^\star$ with $k$-certificate value $\Theta_k(T^\star)$ (for use within the ensemble)}
$T \gets$ arbitrary spanning tree of $G$\;
Precompute LCA/ancestor data on $T$\;
\textsc{PairwiseTable} $\gets$ build all $2$-respecting cut sizes for $T$\;
\textsc{Oracle} $\gets$ construct $k$-respecting oracle(\textsc{PairwiseTable})\;
$C \gets$ seed candidate $k$-tuples\;
$(T^\star,\mathrm{best}) \gets (T,\ \mathrm{Score}(T,C,\textsc{Oracle}))$\;
\For{$it \gets 1$ \KwTo $B$}{
  $f \gets$ random non-tree edge\;
  \ForEach{$e \in P_T(f)$}{
    $T' \gets T-e+f$\;
    update \textsc{PairwiseTable} and \textsc{Oracle} incrementally\;
    $C' \gets$ update candidate tuples\;
    compute $(\underline\Theta_k(T),\overline\Theta_k(T))$ and $(\underline\Theta_k(T'),\overline\Theta_k(T'))$ using $\hat\Theta_k\pm C_k\varepsilon$\;
    \If{$\overline\Theta_k(T') < \underline\Theta_k(T)$}{
       $s_{T'} \gets$ \emph{exact} $\Theta_k(T')$ via Theorem~\ref{thm:exact-eval}\;
       \If{$s_{T'} < \mathrm{best}$}{
           $(T, T^\star, \mathrm{best}) \gets (T', T', s_{T'})$\;
       }
    }
  }
}
Recompute exact witness cut for $T^\star$ (certificate)\;
\Return{$(T^\star,\mathrm{best},\text{witness cut})$}\;
\end{algorithm}

\medskip
\noindent\textbf{Remark on per-swap update cost.}
By Lemma~\ref{lem:swap}, only vertices whose ancestor relation to the fundamental cycle $C=P_T(f)$ changes can affect pairwise values; thus at most $O(|C|\cdot n)$ pairs require updates per swap, keeping local search near-linear in the sum of accepted-cycle lengths.

\section{Planar and Bounded-Genus Graphs \label{sec:special}}
In the general case, our thinness guarantees rely on Karger’s cut-counting
theorem, which gives at most $n^{O(\alpha)}$ cuts of weight $\le \alpha\lambda$.
In planar and bounded-genus graphs, however, classical duality arguments
allow sharper bounds. We recall these known results here, as they lead to
slightly improved sampling parameters, though our main \emph{certificate}
guarantees do not depend on them.

\subsection{Counting near-minimum cuts.}
In planar graphs, every bond corresponds to a simple cycle in the dual $G^\ast$.
It is a standard fact (see, e.g., \cite{Kow03} and related work)
that the number of cycles of length at most $L$ in a planar graph is $O(nL)$.
Combining these observations yields the following folklore bound.

\begin{theorem}[Planar near-min cuts, folklore]
\label{thm:planar-count}
Let $G$ be a simple 2-edge-connected planar graph with edge-connectivity $\lambda$.
For any $\alpha \ge 1$, the number of cuts of value at most $\alpha\lambda$ is $O(n\alpha)$.
Consequently, if we sample
$s = c_2(\alpha\log n + \log(1/\eta))$ trees from a fractional tree packing
distribution and set $k=c_1\alpha\log n$, then with probability at least $1-\eta$,
every such cut is $k$-respecting in at least one sampled tree.
\end{theorem}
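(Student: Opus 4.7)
My plan is to combine planar cut--cycle duality with a short-cycle count in the dual, and then feed the resulting cut count into the packing/union-bound argument already used for Theorem~\ref{thm:thm3}.

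First I would invoke the classical planar cut--cycle bijection. In a 2-edge-connected planar graph $G$, every bond of $G$ corresponds to a simple cycle in the planar dual $G^\ast$ with the same edge count, and under the unweighted hypothesis the dual girth satisfies $g^\ast=\lambda$. Thus every bond of weight at most $\alpha\lambda$ maps to a simple cycle in $G^\ast$ of length at most $\alpha g^\ast$. A general near-minimum cut decomposes into at most $\alpha$ edge-disjoint bonds (each of weight $\ge \lambda$), so counting $\alpha$-near-minimum cuts reduces, up to a polynomial-in-$\alpha$ factor, to counting simple cycles of length $\le \alpha g^\ast$ in a planar graph.

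Second, I would apply the folklore planar short-cycle bound cited in the statement (see \cite{Kow03} and related work): the number of simple cycles of length at most $L$ in a planar graph of girth $g^\ast$ is $O(nL/g^\ast)$. Specializing to $L=\alpha g^\ast$ gives $O(n\alpha)$ short cycles, hence $O(n\alpha)$ bonds of weight $\le\alpha\lambda$; the bond-decomposition from Step~1 keeps the overall count of near-minimum cuts at $O(n\alpha)$ (any extra polynomial factor in $\alpha$ is absorbed into the downstream constants $c_1,c_2$).

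Third, I would plug this count into the sampling argument. By Lemma~\ref{lem:packing}, for each fixed cut $A$ with $w(\delta(A))\le\alpha\lambda$ and a tree $T\sim\mathcal{D}$,
\[
\Pr[\,|T\cap\delta(A)|>k\,]\ \le\ \frac{2\alpha}{k}\ =\ \frac{2}{c_1\log n}
\]
when $k=c_1\alpha\log n$. For $s$ i.i.d.\ samples the per-cut failure probability is $(2/(c_1\log n))^{s}$, and a union bound over the $O(n\alpha)$ near-minimum cuts gives
\[
\Pr[\text{some cut uncovered}]\ \le\ O(n\alpha)\cdot\Big(\tfrac{2}{c_1\log n}\Big)^{s}\ \le\ \eta
\]
for $c_1$ a sufficiently large absolute constant and $s\ge c_2(\alpha\log n+\log(1/\eta))$. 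This matches the stated parameters; in fact the planar count would permit a strictly smaller $s$, but I keep the same form for consistency with Theorem~\ref{thm:thm3}.

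The main obstacle is the planar short-cycle count invoked in Step~2: reproving $O(nL/g^\ast)$ from scratch requires a careful face-charging or discharging argument, and the reduction from arbitrary near-minimum cuts to bonds must avoid an exponential blow-up in $\alpha$. Once that count is secured, the sampling step is a one-line specialization of the proof of Theorem~\ref{thm:thm3}, with the $n^{2\alpha}$ Karger bound replaced by $O(n\alpha)$.
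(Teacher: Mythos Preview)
Your approach mirrors the paper's: planar cut--cycle duality, then a short-cycle count in the dual, then the union-bound/sampling argument of Theorem~\ref{thm:thm3} with the smaller count substituted for Karger's $n^{2\alpha}$. The paper's proof is a three-line sketch doing exactly this.

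The one substantive difference is your Step~1 bond-decomposition. The paper does not attempt to reduce arbitrary near-minimum cuts to bonds; it simply counts bonds, each of which corresponds bijectively to a simple dual cycle, and then invokes the coverage argument directly. Your proposed reduction is, as you yourself flag, the weak point: decomposing an $\alpha$-near-minimum cut into at most $\alpha$ edge-disjoint bonds and then counting tuples of bonds loses a factor $\binom{O(n\alpha)}{\le\alpha}$, which is exponential in $\alpha$, not polynomial. So this step does not work as stated and cannot be ``absorbed into $c_1,c_2$''. Drop it and restrict the count to bonds (as the paper implicitly does), and the rest of your argument goes through verbatim. Your form of the cycle bound, $O(nL/g^\ast)$, is actually cleaner than the paper's stated $O(nL)$: with $L=\alpha g^\ast$ it gives $O(n\alpha)$ directly, whereas the paper's version needs an unstated normalization by the dual girth to reach the same conclusion.
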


\begin{proof}
Each bond in $G$ corresponds to a simple cycle in the dual.
Since planar graphs have only $O(nL)$ cycles of length at most $L$,
the number of bonds of size $\le \alpha\lambda$ is $O(n\alpha)$.
The coverage argument then follows from Theorem~\ref{thm:thm3},
using this smaller union bound.
\end{proof}

A similar argument applies to graphs embedded on surfaces of genus $\gamma$,
where the number of short cycles grows by an additive $O(n\gamma)$ term.

\begin{theorem}[Bounded genus near-min cuts, folklore]
\label{thm:genus-count}
Let $G$ be embedded on an orientable surface of genus $\gamma$.
Then the number of cuts of value at most $\alpha\lambda$ is $O(n(\alpha+\gamma))$.
Consequently, the same coverage guarantee holds with
$s = c_2((\alpha+\gamma)\log n + \log(1/\eta))$ and $k=c_1\alpha\log n$.
\end{theorem}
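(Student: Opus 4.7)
The plan is to mirror the proof of Theorem~\ref{thm:planar-count} but to replace the planar cycle-counting bound by its bounded-genus analogue and substitute the result into the coverage argument of Theorem~\ref{thm:thm3}. First, I would invoke surface duality: for $G$ embedded on an orientable surface of genus $\gamma$, every bond of $G$ still corresponds to a simple cycle of the embedded dual $G^\ast$, so a bond of weight at most $\alpha\lambda$ maps to a dual cycle of length at most $\alpha\lambda$ (in the unweighted $2$-edge-connected case; the weighted case is handled by the usual dual-girth-by-$\lambda$ normalization). Arbitrary near-minimum cuts are then accounted for by decomposing them into bonds, which introduces only a factor absorbed into the constants.

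Second, I would import the folklore cycle-counting bound: any graph embedded on an orientable surface of genus $\gamma$ admits at most $O(n(L+\gamma))$ simple cycles of length at most $L$. I would derive this by cutting the surface along a system of $O(\gamma)$ short non-separating simple curves, chosen as generators of the first homology, which reduces to a planar piece with $O(\gamma)$ extra boundary arcs. Contractible short cycles bound disks in the resulting planar pieces and are counted by the planar $O(nL)$ bound; short non-contractible cycles are charged against the $2\gamma$ homology generators, contributing $O(n\gamma)$ in total. Substituting $L$ by $\alpha$ after normalizing by $\lambda$ gives the claimed $O(n(\alpha+\gamma))$ bound on qualifying bonds.

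Finally, I would plug this sharper cut-count back into the union-bound step in the proof of Theorem~\ref{thm:thm3}. That argument requires the product of the cut-count and the per-cut failure probability $\bigl(2/(c_1\log n)\bigr)^s$ to be at most $\eta$; replacing the Karger bound $n^{2\alpha}$ by $O(n(\alpha+\gamma))$ makes $s \ge c_2\bigl((\alpha+\gamma)\log n + \log(1/\eta)\bigr)$ sufficient, yielding the stated sampling parameters. All other ingredients, including Lemma~\ref{lem:packing} on expected crossings under a fractional tree packing and the Markov-based per-cut tail bound, carry over verbatim.

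The main obstacle is the cycle-counting step on genus-$\gamma$ surfaces: the clean planar argument leverages that every cycle bounds a disk, which fails for non-contractible loops. Rigorously bookkeeping short non-contractible cycles requires a topological decomposition along $O(\gamma)$ short curves and an explicit argument that only $O(n\gamma)$ homotopically distinct short cycles can be short. A secondary subtlety is the bond-to-cut reduction: it is standard, but one must check that it absorbs cleanly into the additive $(\alpha+\gamma)$ factor rather than inflating it multiplicatively, which is where the honesty of the ``folklore'' label should be stress-tested before committing to the exact constants $c_1,c_2$.
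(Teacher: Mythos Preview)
Your proposal is correct and follows essentially the same approach as the paper: the paper gives no standalone proof of Theorem~\ref{thm:genus-count} beyond the sentence ``A similar argument applies to graphs embedded on surfaces of genus $\gamma$, where the number of short cycles grows by an additive $O(n\gamma)$ term,'' and your plan (surface duality $+$ genus-adjusted cycle count $+$ plug into the union bound of Theorem~\ref{thm:thm3}) is exactly that similar argument spelled out. You in fact supply more detail than the paper does, notably the homology-based split into contractible and non-contractible short dual cycles, and you correctly flag the bond-versus-cut bookkeeping as the place where the ``folklore'' label is doing real work.
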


\medskip
\noindent\textbf{Remark (tighter coverage trade-offs).}
Using Lemma~\ref{lem:packing}, for a fixed cut $A$ with $w(\delta(A))\le\alpha\lambda$ we have
$\Pr[\,|T\cap\delta(A)|>k\,]\le 2\alpha/k$. In planar or bounded-genus graphs, the
\emph{smaller} count of near-min cuts means one can also choose
\[
k=\Theta(\alpha) \quad\text{and}\quad s=\Theta\big(\log n + \log(1/\eta)\big)
\]
(and a mild $\log(\alpha+\gamma)$ factor if desired), trading a slightly larger $s$ for a much smaller $k$.
Either parameterization yields the same ensemble per-cut guarantees below; we keep the
$k=\Theta(\alpha\log n)$ form for consistency with the general-case statements.

\subsection{Dual girth and certified thinness}
The above counting bounds improve sampling efficiency.
A different and complementary phenomenon is that
the \emph{dual girth} directly bounds the certificate parameter.

\begin{theorem}[Dual girth bound]
\label{thm:dual-girth}
Let $G$ be a simple 2-edge-connected planar graph with dual $G^\ast$ of (unweighted) girth $g^\ast$.
For any spanning tree $T$ and any $k\ge 1$,
\[
  \Theta_k(T)\;\le\;\frac{k}{g^\ast}.
\]
In particular, in unweighted planar graphs $g^\ast=\lambda$, and therefore
$\Theta_k(T)\le k/\lambda$. For weighted graphs, if $g^\ast_w$ denotes the
\emph{weighted} dual girth (minimum dual cycle weight), then $g^\ast_w\ge \lambda$
(with equality when all edge weights are $1$), hence $\Theta_k(T)\le k/\lambda$ as well.
\end{theorem}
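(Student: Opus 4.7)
The plan is to use planar cut--cycle duality to lower-bound every cut weight by $g^*$ uniformly across trees, and then combine this with the $k$-respecting cap on the numerator. Concretely, in a $2$-edge-connected plane graph $G$ the classical bijection identifies bonds of $G$ with simple cycles of the dual $G^*$; more generally, $\delta(A)$ corresponds to an edge-disjoint union of simple dual cycles, one per connected component of $G[A]$ (equivalently, of $G[V\setminus A]$). I would open the proof by recalling this duality and fixing notation on the primal/dual edge correspondence.

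Given the duality, the next step is to derive the cut floor $|\delta(A)|\ge g^*$ for every nonempty cut in the unweighted case. If $\delta(A)$ is a bond, it equals a single dual simple cycle and hence has at least $g^*$ edges; otherwise, decomposing the shore into connected components splits $\delta(A)$ disjointly into bonds, each of which already contributes $\ge g^*$ edges. Combined with the hypothesis $|E(T)\cap\delta(A)|\le k$ for every $k$-respecting cut, this yields
\[
\frac{|E(T)\cap\delta(A)|}{|\delta(A)|}\ \le\ \frac{k}{g^*},
\]
and taking the maximum over all $k$-respecting cuts gives $\Theta_k(T)\le k/g^*$. The argument uses no property of $T$ beyond the respecting constraint, so the bound holds uniformly for every spanning tree, which is exactly the planar guarantee claimed.

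For the identification $g^*=\lambda$ in the unweighted $2$-edge-connected case, I would note that a global min cut is necessarily a bond, since splitting a non-bond shore into its connected components strictly decreases the cut weight; thus $\lambda$ equals the minimum bond size, which equals $g^*$ via the bijection. The weighted extension is handled by substituting edge counts with weight sums throughout the duality argument: every cut weight is lower-bounded by the minimum dual cycle weight $g^*_w$, which equals the minimum bond weight and hence $\lambda$, delivering $\Theta_k(T)\le k/g^*_w\le k/\lambda$. The only delicate bookkeeping I anticipate is making the component-wise bond decomposition of a non-bond cut explicit, so that no dual edges are double-counted across shore components; everything else is a direct composition of planar cut--cycle duality, the definition of $\Theta_k(T)$, and the elementary observation that a global min cut is a bond.
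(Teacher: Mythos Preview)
Your proposal is correct and follows essentially the same route as the paper: both arguments reduce to the single observation that in a $2$-edge-connected plane graph every nonempty cut has weight at least the dual girth (via the bond--cycle duality), after which the bound $\Theta_k(T)\le k/g^\ast$ is immediate from the numerator cap $|E(T)\cap\delta(A)|\le k$. Your treatment is in fact a bit more careful than the paper's---you spell out the bond decomposition for non-minimal cuts and explicitly justify $g^\ast=\lambda$ by noting that a global min cut must be a bond---whereas the paper's proof simply asserts that every cut has cardinality at least $g^\ast$ and opens with a remark about fundamental cycles and dual fundamental cuts that is not actually used in the argument.
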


\begin{proof}
Each fundamental cycle of $T$ in the primal corresponds to a fundamental cut
in the dual. If the dual has girth $g^\ast$, then every nontrivial cycle in $G^\ast$
has length at least $g^\ast$, so every cut in $G$ has cardinality at least $g^\ast$
(unweighted case). If a cut $A$ is $k$-respecting with respect to $T$,
then $|T\cap\delta(A)| \le k$, and thus
$\frac{|T\cap\delta(A)|}{w(\delta(A))} \le k/g^\ast$ (interpreting $w$ as cardinality).
The weighted statement follows by replacing lengths with dual cycle weights.
Maximizing over all $k$-respecting cuts yields the claim.
\end{proof}

\begin{corollary}[Planar certified thinness]
\label{cor:planar-thin}
In unweighted planar graphs, for any spanning tree $T$ and any $k$,
the certificate value satisfies $\Theta_k(T)\le k/\lambda$; for fixed $k$
this is $O(1/\lambda)$. In particular, if $\lambda=\Omega(\log n)$
then $\Theta_k(T)=o(1)$.
\end{corollary}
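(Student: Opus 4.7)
}
The plan is to derive the corollary directly from Theorem~\ref{thm:dual-girth} by identifying the unweighted dual girth $g^\ast$ with the edge-connectivity $\lambda$. First I would recall the classical planar cut--cycle duality: a minimal edge cut (bond) in the primal $G$ corresponds to a simple cycle in the dual $G^\ast$, and conversely, every simple cycle in $G^\ast$ yields a bond in $G$ of the same cardinality (in the unweighted case). Thus the minimum bond size in $G$ equals the shortest simple cycle length in $G^\ast$, i.e., $\lambda=g^\ast$.

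Next I would invoke Theorem~\ref{thm:dual-girth}, which asserts that for every spanning tree $T$ and every $k\ge 1$, $\Theta_k(T)\le k/g^\ast$. Substituting $g^\ast=\lambda$ gives the desired inequality $\Theta_k(T)\le k/\lambda$. This holds uniformly over all spanning trees, so no optimization over $T$ is needed; the certificate bound is an intrinsic graph property inherited from duality and is obtained \emph{without} appeal to sampling or cut counting (contrasting with the ensemble guarantees in Section~\ref{sec:k-to-global}).

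For the two quantitative consequences, I would observe: (i) for any fixed constant $k$, the bound $k/\lambda$ becomes $O(1/\lambda)$ since $k$ is absorbed into the constant; (ii) if $\lambda=\Omega(\log n)$, then $k/\lambda=O(1/\log n)=o(1)$, so $\Theta_k(T)=o(1)$ regardless of the choice of $T$. Both follow immediately from the main inequality.

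The only subtle point—and the main conceptual obstacle, though not a technical one—is the weighted vs.\ unweighted distinction in the identity $g^\ast=\lambda$. Theorem~\ref{thm:dual-girth} already records that in the weighted case one must use the weighted dual girth $g^\ast_w$, which satisfies $g^\ast_w\ge \lambda$; so the proof of the corollary does not need to reopen that case. Writing the proof amounts to one line of duality and one substitution, plus two trivial asymptotic remarks.
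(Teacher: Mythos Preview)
Your proposal is correct and matches the paper's approach exactly: the corollary is stated without a separate proof because it follows immediately from Theorem~\ref{thm:dual-girth}, which already records that $g^\ast=\lambda$ in unweighted planar graphs, and your plan is precisely to substitute this identity into the bound $\Theta_k(T)\le k/g^\ast$ and read off the two asymptotic consequences.
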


\begin{theorem}[Ensemble certification in bounded genus]
\label{thm:genus-thin}
Let $G$ be embedded on a surface of genus $\gamma$ with edge-connectivity $\lambda$,
and fix $\alpha\ge 1$ and $\eta\in(0,1)$.
There is a randomized algorithm that samples
\[
s \;=\; \tilde O\big((\alpha+\gamma)\,+\,\log(1/\eta)\big)
\]
trees from a fractional packing distribution and computes exact $k$-certificate
values for each tree with
\[
k \;=\; \Theta(\alpha\log n) \quad\text{(or, using the trade-off above, }k=\Theta(\alpha)\text{)}.
\]
With probability at least $1-\eta$, for \emph{every} cut $A$ of value $\le \alpha\lambda$
there exists a sampled tree $T_i$ such that
\[
\frac{|T_i\cap\delta(A)|}{w(\delta(A))}\ \le\ \Theta_k(T_i)\ \le\ \frac{k}{\lambda}
\;=\; O\!\Big(\tfrac{\alpha\log n}{\lambda}\Big)\quad\text{\rm(or }O(\alpha/\lambda)\text{ under }k=\Theta(\alpha)\text{)}.
\]
Thus, the collection $\{(T_i,\Theta_k(T_i))\}$ forms an explicit, verifiable
\emph{ensemble certificate} covering all $\alpha$-near-minimum cuts.
\end{theorem}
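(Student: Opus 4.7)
The approach is to parallel the general-graph argument of Theorem~\ref{thm:ensemble-cert}, substituting the sharper bounded-genus count of Theorem~\ref{thm:genus-count} in place of Karger's $n^{O(\alpha)}$ bound at the union-bound step. The three ingredients I would combine are: (i) the per-cut tail estimate from Lemma~\ref{lem:packing} applied to a fractional packing distribution $\mathcal{D}$, (ii) the linear-in-$n$ count $O(n(\alpha+\gamma))$ of $\alpha$-near-minimum cuts in genus-$\gamma$ graphs, and (iii) the exact certificate oracle of Theorem~\ref{thm:exact-eval} together with the universal baseline bound $\Theta_k(T)\le k/\lambda$ from~\eqref{eq:trivial-k-over-lambda}.

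First I would draw $T_1,\dots,T_s$ i.i.d.\ from $\mathcal{D}$. For any fixed cut $A$ with $w(\delta(A))\le\alpha\lambda$, Lemma~\ref{lem:packing} gives $\Pr_{T\sim\mathcal{D}}[\,|T\cap\delta(A)|>k\,]\le 2\alpha/k$, so the probability that $A$ is $k$-respecting in \emph{none} of the $s$ samples is at most $(2\alpha/k)^{s}$. In the primary parameterization $k=c_1\alpha\log n$ this single-sample miss probability is $O(1/\log n)$; in the trade-off parameterization $k=c_1\alpha$ the miss probability is only a constant, so $s$ must carry an extra logarithmic factor (absorbed by the $\tilde O$).

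Second I would close the union bound using Theorem~\ref{thm:genus-count}: the number of cuts of value $\le\alpha\lambda$ is $O(n(\alpha+\gamma))$. Requiring
\[
O(n(\alpha+\gamma))\cdot (2\alpha/k)^{s}\ \le\ \eta
\]
and solving, I get $s=\tilde O((\alpha+\gamma)+\log(1/\eta))$ samples suffice under either choice of $k$. Then, for each $T_i$ I would invoke Theorem~\ref{thm:exact-eval} to compute $\Theta_k(T_i)$ exactly together with a witnessing cut; by definition, any $k$-respecting cut $A$ in $T_i$ satisfies $|T_i\cap\delta(A)|/w(\delta(A))\le \Theta_k(T_i)$, and \eqref{eq:trivial-k-over-lambda} yields $\Theta_k(T_i)\le k/\lambda$. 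Conditioning on the success event of the union bound gives the stated per-cut certification for every $\alpha$-near-minimum cut.

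The main obstacle is calibrating the two parameterizations against the same sample count $s=\tilde O((\alpha+\gamma)+\log(1/\eta))$. Under $k=\Theta(\alpha\log n)$, the per-sample failure decays as $O(1/\log n)$ and $s$ only needs to pay for the union bound; under $k=\Theta(\alpha)$, $s$ must additionally compensate for the weaker single-sample guarantee, and I would check that the $\log n$ factor swallowed by $\tilde O$ exactly covers this. A secondary point to verify is that the baseline $\Theta_k(T_i)\le k/\lambda$—rather than the planar-specific dual-girth improvement of Theorem~\ref{thm:dual-girth}, which does not apply in genus $\gamma>0$—is the correct certificate ceiling to quote here, which is why the theorem states $O((\alpha\log n)/\lambda)$ rather than $O(1/\lambda)$.
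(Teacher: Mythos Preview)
Your proposal is correct and follows the same route the paper intends: the paper does not spell out a separate proof for Theorem~\ref{thm:genus-thin}, but its structure makes clear that it is obtained exactly as you describe, by rerunning the coverage argument (Theorem~\ref{thm:thm3}/Theorem~\ref{thm:genus-count}) with the $O(n(\alpha+\gamma))$ cut count in the union bound, then layering on the exact $\Theta_k$ evaluation and the universal ceiling~\eqref{eq:trivial-k-over-lambda}, just as in Theorem~\ref{thm:ensemble-cert}. Your handling of the two parameterizations and your remark that dual girth does not help for $\gamma>0$ are both on target.
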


\medskip
In planar and bounded-genus graphs, structural properties
lead to slightly better sampling bounds and, more importantly,
to direct certificate guarantees via dual girth.
Thus our framework not only recovers known cut-counting theorems,
but also strengthens them by producing explicit, verifiable
$k$-respecting certificates.

\section{Conclusion and Future Directions}
\label{sec:summary}
Previous progress on thin trees has largely fallen into two categories:
\emph{existential} results, which establish that thin trees exist but without providing an
efficient construction, and \emph{spectral} results, which prove stronger asymptotic bounds
under spectral relaxations but again without yielding certifiable objects. For example,
Asadpour et al.~\cite{asadpour2017} showed that entropy-rounding a fractional Held--Karp
solution yields an $O(\log n/\log\log n)$-thin tree, but only relative to a fractional LP and
without a certificate that can be verified after the fact. Anari and Oveis Gharan~\cite{anariog2015focs}
proved the existence of spectrally thin trees with thinness $\mathrm{polyloglog}(n)/k$, but
again without a constructive or certifiable guarantee. In planar and bounded-genus graphs,
Oveis Gharan and Saberi~\cite{OS11} obtained $O(\sqrt{\gamma}\log \gamma/k)$ bounds using
duality with cycles, but only at the existential level. More recently, \cite{thin-tree-klien} gave an algorithm to find thin trees restricted to cut constraints that satisfy laminar relations.

\paragraph{What we add.}
We introduce the notion of \emph{certifiable thinness} through the optimization target
$\Theta_k(T)$ and give the first procedure that computes it \emph{exactly} in polynomial time
for fixed $k$, outputting a witnessing cut. This compresses the exponential family of
$k$-respecting cuts to $O(n^2)$ pairwise primitives and turns thinness verification into a
tree-local computation. Combined with fractional tree packings and cut counting, our method
produces an \emph{explicit, verifiable ensemble} of spanning trees with the guarantee that for
\emph{every} $\alpha$-near-minimum cut $A$ there exists some sampled tree $T_i$ certifying
\[
\frac{|T_i\cap\delta(A)|}{w(\delta(A))}\ \le\ \Theta_k(T_i)\ \le\ O\!\Big(\tfrac{\alpha\log n}{\lambda}\Big),
\]
and analogously $O(1/\lambda)$ in planar graphs via dual girth, always with certificates.
The deliverable is thus a compact family $\{(T_i,\Theta_k(T_i))\}$ that \emph{covers all light cuts}
with verifiable bounds, rather than an unproven single-tree global claim.

\paragraph{How this advances the thin-tree program.}
Our results suggest a two-step constructive route toward the thin-tree conjecture:
\begin{enumerate}
  \item \textbf{Coverage at small $k$.} Prove that, in $k$-edge-connected graphs (so $\lambda=k$ in the unweighted case), \emph{every} $\alpha$-near-min cut is $k_0$-respecting in at least one tree from a standard packing with \emph{constant} $k_0=O(1)$ (today we show $k_0=\Theta(\alpha\log n)$). This would immediately upgrade our per-cut certificates to $O(1/\lambda)=O(1/k)$.
  \item \textbf{From ensembles to one tree.} Develop a \emph{certified stitching/patching} procedure (via fundamental-cycle swaps guided by $\Theta_{k_0}$) that merges an ensemble which covers all near-min cuts into a single tree whose $\Theta_{k_0}$ controls all relevant cuts. Our exact oracle and local-improvement primitive provide precisely the feedback needed to attempt such stitching with correctness guarantees.
\end{enumerate}
Either advance would constitute concrete progress toward a constructive resolution of the thin-tree conjecture; achieving both would essentially settle it up to constants.

\paragraph{Concrete next steps.}
We see several promising approaches:
\begin{itemize}
  \item \textbf{Sharper coverage via dependent rounding.} Replace independent tree sampling from a packing by negatively correlated (swap-)rounding to reduce the expected number of crossings per light cut, aiming for \emph{constant} $k$ while keeping concentration.
  \item \textbf{Certified stitching.} Use $\Theta_k$ as a potential to guide fundamental-cycle swaps that monotonically preserve existing per-cut certificates while expanding the set of cuts captured by a \emph{single} tree. The local-update Lemma enables such patching to be implemented and certified.
  \item \textbf{Multi-scale certificates.} Combine $\Theta_{k}$ at geometrically increasing $k$ to control successively heavier cuts; this could yield a composite potential more tightly coupled to global thinness than any single $k$.
  \item \textbf{Exploiting structure.} In planar/bounded-genus graphs, dual girth already implies $\Theta_k(T)\le k/\lambda$ for \emph{every} tree; extending analogous lower bounds (e.g., via expanders/minor decompositions or near-laminarity of near-min cuts) to broader graph classes would directly strengthen certificates.
\end{itemize}

\section{Acknowledgements}
The author was partially supported by ``Stiftelsen för teknisk vetenskaplig forskning och utbildning'' for PhD students at KTH.
\bibliography{Bibliography}
\appendix
\section{Pairwise sufficiency for $\sigma$ on $T$-descendant cuts (Proof of Lemma~\ref{lem:pairwise-sufficiency})}

\label{app:pairwise}

Let $T$ be a rooted spanning tree with root $r_T$. For vertex sets
$A_1,\ldots,A_i\subseteq V$ define
\[
\sigma(A_1,\ldots,A_i)\ \triangleq\ \bigl|\delta(A_1)\cap\cdots\cap\delta(A_i)\bigr|.
\]
In our setting $A_j=D_T(x_j)$ for $x_j\in V\setminus\{r_T\}$.
Recall the descendant family $\{D_T(x):x\neq r_T\}$ is \emph{laminar}:
for any $x,y$, exactly one of $D_T(x)\subseteq D_T(y)$, $D_T(y)\subseteq D_T(x)$,
or $D_T(x)\cap D_T(y)=\emptyset$ holds. We write $x\perp_T y$ iff
$D_T(x)\cap D_T(y)=\emptyset$, and $\level[T]{x}$ for the depth of $x$ in $T$.

\begin{lemma}[Pairwise sufficiency]\label{lemA:pairwise}
Fix $S=\{x_1,\ldots,x_k\}\subseteq V\setminus\{r_T\}$ with $k\ge2$.
Then either
\[
\sigma\bigl(D_T(x_1),\ldots,D_T(x_k)\bigr)=0,
\]
or there exist $p,q\in\{1,\ldots,k\}$ such that
\[
\sigma\bigl(D_T(x_1),\ldots,D_T(x_k)\bigr)
=\sigma\bigl(D_T(x_p),D_T(x_q)\bigr).
\]
\end{lemma}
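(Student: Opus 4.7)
The plan is to characterize exactly which edges lie in the $k$-wise boundary intersection, and then run a case analysis on the laminar structure of $\{D_T(x_i)\}$ to identify a suitable witnessing pair $(p,q)$ or conclude that no edge contributes.

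First I would observe that $e=\{u,v\}\in\delta(D_T(z))$ iff $z$ is an ancestor of exactly one of $u,v$. Writing $w=\mathrm{lca}_T(u,v)$, the set of such $z$ is precisely $P_u\cup P_v$, where $P_u$ (resp.\ $P_v$) is the chain of ancestors of $u$ (resp.\ $v$) strictly below $w$. These are two vertex-disjoint chains whose elements across the two chains are pairwise incomparable in the ancestor order. Hence an edge $e$ contributes to $\sigma(D_T(x_1),\ldots,D_T(x_k))$ iff the set $\{x_1,\ldots,x_k\}$ fits inside $P_u\cup P_v$, i.e., splits into at most two ancestor chains below a common LCA. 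This ``edge footprint'' lemma is what channels the whole analysis.

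Next I would invoke laminarity of $\{D_T(x_i)\}$ and do a case analysis on the laminar forest whose roots are the inclusion-maximal descendant sets. In Case~1 (single maximal root, chain below), all $x_i$'s are totally ordered; any contributing $e$ forces the entire chain into one of $P_u,P_v$, and since an ancestor chain is determined by its top and bottom, the contribution condition collapses to $e\in\delta(D_T(x_{\mathrm{top}}))\cap\delta(D_T(x_{\mathrm{bot}}))$, yielding $\sigma(D_T(x_1),\ldots,D_T(x_k))=\sigma(x_{\mathrm{top}},x_{\mathrm{bot}})$. In Case~2 (two maximal roots $a,b$ with $D_T(a)\cap D_T(b)=\emptyset$, chain below each), letting $x_a^{\mathrm{bot}},x_b^{\mathrm{bot}}$ be the deepest members of the two sub-chains, the outer constraints $v\notin D_T(a)$ and $u\notin D_T(b)$ follow automatically from the endpoint requirements $u\in D_T(x_a^{\mathrm{bot}}),\ v\in D_T(x_b^{\mathrm{bot}})$ via disjointness, giving $\sigma(D_T(x_1),\ldots,D_T(x_k))=\sigma(x_a^{\mathrm{bot}},x_b^{\mathrm{bot}})$. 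In Cases~3 and~4 (three or more maximal roots, or a maximal root with a branching below it), one can exhibit three descendant sets in the family that are pairwise disjoint and hence pairwise incomparable, which therefore cannot fit into the two ancestor chains $P_u\cup P_v$; any attempt to place them forces a single endpoint into two disjoint subtrees at once. Hence $\sigma(D_T(x_1),\ldots,D_T(x_k))=0$.

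The main obstacle will be formalizing Cases~3 and~4 cleanly, namely converting each ``branching'' pattern in the laminar forest into an explicit triple $D_T(y_1),D_T(y_2),D_T(y_3)$ whose simultaneous boundary-crossing is structurally impossible given that edge footprints live in the shape $P_u\cup P_v$. Once this is in place, the witnessing pair $(p,q)$ is read off directly: $(\mathrm{top},\mathrm{bot})$ of the chain in Case~1, and the pair of deepest descendants of the two sub-chains in Case~2, closing the lemma.
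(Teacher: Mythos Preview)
Your approach is correct and genuinely different from the paper's. The paper argues by four laminar cases, the last of which is an \emph{iterative elimination}: whenever some $a\in S$ has both a strict descendant and a non-descendant in $S$, one shows $\delta(D_T(a))$ can be dropped from the intersection without change, and recurses until a chain (paper's Case~2), all-disjoint (paper's Case~1), or a single-top-with-branching configuration (paper's Case~3) is reached. Your route instead isolates the \emph{edge footprint} $\{z: e\in\delta(D_T(z))\}=P_u\cup P_v$ (two disjoint ancestor chains below $\mathrm{lca}(u,v)$) and classifies once and for all which laminar families $S$ embed in such a shape. This is more structural and avoids recursion; it also makes the witnessing pair explicit in one shot: $(x_{\mathrm{top}},x_{\mathrm{bot}})$ for a single chain, $(x_a^{\mathrm{bot}},x_b^{\mathrm{bot}})$ for two chains. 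The paper's elimination ultimately lands on the same witnesses but only after peeling.

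One caveat on your Cases~3/4: the claim ``exhibit three pairwise disjoint descendant sets'' does not go through when the laminar forest has a \emph{single} maximal root $a$ with branching below it (e.g., $S=\{a,c_1,c_2\}$ with $D_T(c_1),D_T(c_2)\subset D_T(a)$ disjoint); there are only two pairwise disjoint sets here. The correct obstruction in this sub-case is different: placing $c_1\in P_u$ and $c_2\in P_v$ forces $u\in D_T(c_1)$ and $v\in D_T(c_2)$, hence $a$ is an ancestor of both $u$ and $v$, so $a\notin P_u\cup P_v$. This is exactly the paper's Case~3. With that adjustment your case split is complete: Cases~1--2 give the pair, a single root with branching is killed by the ``root is an ancestor of both chains'' argument, and any configuration with $\ge 2$ roots and a branching (or $\ge 3$ roots) does produce three pairwise disjoint sets, which cannot fit in $P_u\cup P_v$.
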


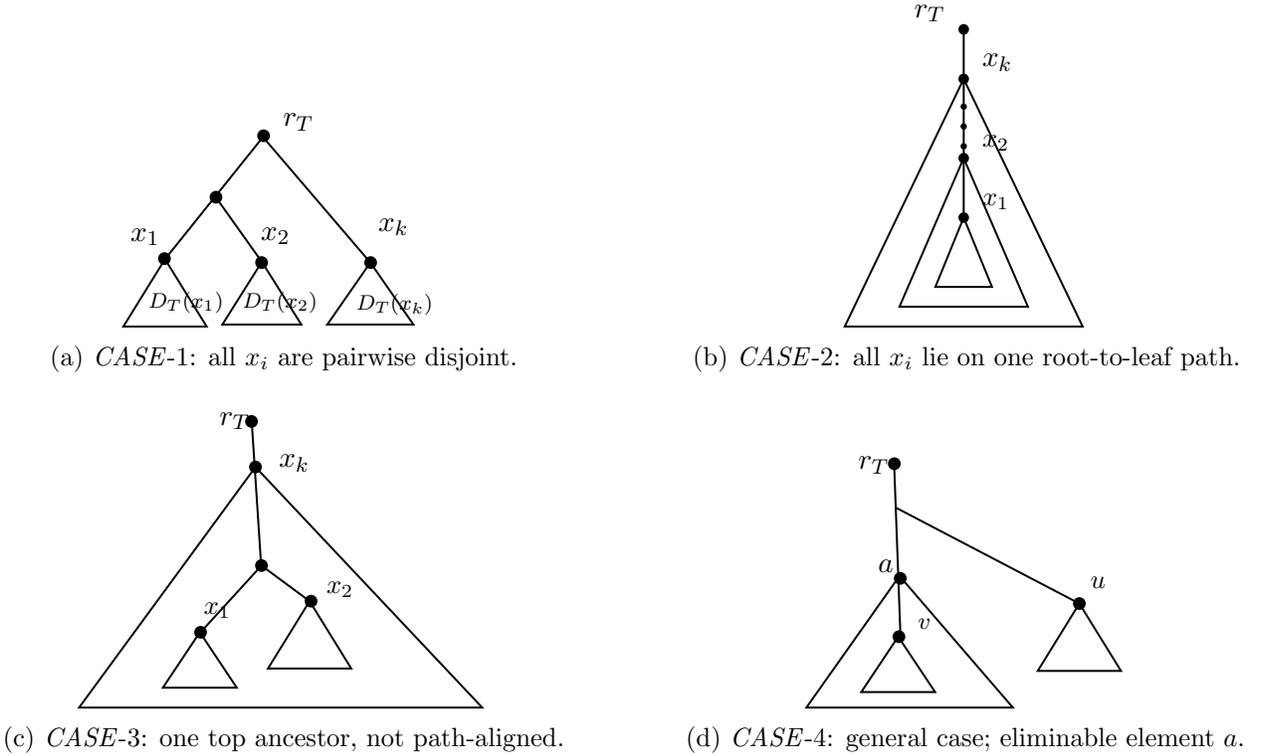
\begin{figure}[h]
  \centering
  \begin{subfigure}{0.45\linewidth}
    \centering
    \tikzset{every picture/.style={line width=0.75pt}} 

\begin{tikzpicture}[x=0.75pt,y=0.75pt,yscale=-1,xscale=1]

\draw (138.46,20.71) -- (87.75,84);
\draw [fill=black] (134.75,21.71) .. controls (134.75,20.22) and (135.97,19) .. (137.46,19)
.. controls (138.96,19) and (140.18,20.22) .. (140.18,21.71)
.. controls (140.18,23.21) and (138.96,24.43) .. (137.46,24.43)
.. controls (135.97,24.43) and (134.75,23.21) .. (134.75,21.71) -- cycle;
\draw (113.11,52.36) -- (136.59,85.36);
\draw (87.75,84) -- (108.75,118) -- (66.75,118) -- cycle;
\draw (136.59,85.36) -- (156.5,117) -- (116.68,117) -- cycle;
\draw (135.75,20.71) -- (191.32,85.71);
\draw (191.32,85.71) -- (213.05,117) -- (169.59,117) -- cycle;
\draw [fill=black] (110.75,52.71) .. controls (110.75,51.22) and (111.97,50) .. (113.46,50)
.. controls (114.96,50) and (116.18,51.22) .. (116.18,52.71)
.. controls (116.18,54.21) and (114.96,55.43) .. (113.46,55.43)
.. controls (111.97,55.43) and (110.75,54.21) .. (110.75,52.71) -- cycle;
\draw [fill=black] (84.75,83.71) .. controls (84.75,82.22) and (85.97,81) .. (87.46,81)
.. controls (88.96,81) and (90.18,82.22) .. (90.18,83.71)
.. controls (90.18,85.21) and (88.96,86.43) .. (87.46,86.43)
.. controls (85.97,86.43) and (84.75,85.21) .. (84.75,83.71) -- cycle;
\draw [fill=black] (133.75,85.75) .. controls (133.73,84.25) and (134.93,83.02) .. (136.43,83)
.. controls (137.93,82.98) and (139.16,84.18) .. (139.18,85.68)
.. controls (139.2,87.18) and (138,88.41) .. (136.5,88.43)
.. controls (135,88.45) and (133.77,87.25) .. (133.75,85.75) -- cycle;
\draw [fill=black] (188.75,85.71) .. controls (188.75,84.22) and (189.97,83) .. (191.46,83)
.. controls (192.96,83) and (194.18,84.22) .. (194.18,85.71)
.. controls (194.18,87.21) and (192.96,88.43) .. (191.46,88.43)
.. controls (189.97,88.43) and (188.75,87.21) .. (188.75,85.71) -- cycle;

\draw (142,5) node [anchor=north west] {$r_{T}$};
\draw (65,63) node [anchor=north west] {$x_{1}$};
\draw (130.75,62.71) node [anchor=north west] {$x_{2}$};
\draw (190,58) node [anchor=north west] {$x_{k}$};

\draw (74,95) node [anchor=north west,font=\scriptsize] {$D_{T}(x_{1})$};
\draw (121.5,95) node [anchor=north west,font=\scriptsize] {$D_{T}(x_{2})$};
\draw (179,96) node [anchor=north west,font=\scriptsize] {$D_{T}(x_{k})$};

\end{tikzpicture}
    \subcaption{\CASE{1}: all $x_i$ are pairwise disjoint.}
    \label{fig:case1}
  \end{subfigure}
  \hfill
  \begin{subfigure}{0.45\linewidth}
    \centering

\tikzset{every picture/.style={line width=0.75pt}} 

\def\xspine{175}    
\def\yroot{25}      

\def\yk{50}         
\def\yii{90}        
\def\yone{120}      

\def\Bk{175}
\def\Bii{165}
\def\Bone{155}

\def\wk{60}
\pgfmathsetmacro{\m}{\wk/(\Bk-\yk)}
\pgfmathsetmacro{\wii}{0.9*\m*(\Bii-\yii)}
\pgfmathsetmacro{\wone}{0.85*\m*(\Bone-\yone)}

\begin{tikzpicture}[x=0.75pt,y=0.75pt,yscale=-1,xscale=1]

\draw (\xspine,\yroot) -- (\xspine,\yone);

\fill (\xspine,\yroot) circle (2pt);   
\fill (\xspine,\yk)   circle (2pt);    
\fill (\xspine,\yii)  circle (2pt);    
\fill (\xspine,\yone) circle (2pt);    

\foreach \y in {64, 74, 84} {\fill (\xspine,\y) circle (1.2pt);}

\draw (\xspine,\yk)   -- (\xspine+\wk,\Bk)     -- (\xspine-\wk,\Bk)     -- cycle; 
\draw (\xspine,\yii)  -- (\xspine+\wii,\Bii)   -- (\xspine-\wii,\Bii)   -- cycle; 
\draw (\xspine,\yone) -- (\xspine+\wone,\Bone) -- (\xspine-\wone,\Bone) -- cycle; 

\node[above right,xshift=3pt,yshift=-1pt]             at (\xspine,\yk)   {$x_{k}$};
\node[above right,xshift=3pt,yshift=-1pt,font=\small] at (\xspine,\yii)  {$x_{2}$};
\node[above right,xshift=3pt,yshift=-1pt,font=\small] at (\xspine,\yone) {$x_{1}$};
\node[above left,xshift=-3pt,yshift=-1pt]             at (\xspine,\yroot) {$r_{T}$};

\end{tikzpicture}
    \subcaption{\CASE{2}: all $x_i$ lie on one root-to-leaf path.}
    \label{fig:case2}
  \end{subfigure}

  \vspace{0.8em}

  \begin{subfigure}{0.45\linewidth}
    \centering
    \tikzset{every picture/.style={line width=0.75pt}} 

\begin{tikzpicture}[x=0.75pt,y=0.75pt,yscale=-1,xscale=1]

\draw (173.43,12.57) -- (178.49,89.74);


\draw (147.38,120.71) -- (166,148.67) -- (128.75,148.67) -- cycle;
\draw (175.43,37.29) -- (289.8,158.71) -- (86.48,158.71) -- cycle;

\draw [fill=black] (172.71,37.29) .. controls (172.71,35.79) and (173.93,34.57) .. (175.43,34.57)
.. controls (176.93,34.57) and (178.14,35.79) .. (178.14,37.29)
.. controls (178.14,38.78) and (176.93,40) .. (175.43,40)
.. controls (173.93,40) and (172.71,38.78) .. (172.71,37.29) -- cycle ;
\draw [fill=black] (145.04,120.71) .. controls (145.04,119.22) and (146.25,118) .. (147.75,118)
.. controls (149.25,118) and (150.46,119.22) .. (150.46,120.71)
.. controls (150.46,122.21) and (149.25,123.43) .. (147.75,123.43)
.. controls (146.25,123.43) and (145.04,122.21) .. (145.04,120.71) -- cycle ;
\draw [fill=black] (175.75,87.06) .. controls (175.73,85.56) and (176.93,84.33) .. (178.43,84.32)
.. controls (179.93,84.3) and (181.16,85.5) .. (181.18,87)
.. controls (181.19,88.5) and (179.99,89.73) .. (178.49,89.74)
.. controls (176.99,89.76) and (175.77,88.56) .. (175.75,87.06) -- cycle ;
\draw [fill=black] (170.71,14.29) .. controls (170.71,12.79) and (171.93,11.57) .. (173.43,11.57)
.. controls (174.93,11.57) and (176.14,12.79) .. (176.14,14.29)
.. controls (176.14,15.78) and (174.93,17) .. (173.43,17)
.. controls (171.93,17) and (170.71,15.78) .. (170.71,14.29) -- cycle ;
\draw [fill=black] (200.75,105) .. controls (200.75,103.5) and (201.97,102.29) .. (203.46,102.29)
.. controls (204.96,102.29) and (206.18,103.5) .. (206.18,105)
.. controls (206.18,106.5) and (204.96,107.71) .. (203.46,107.71)
.. controls (201.97,107.71) and (200.75,106.5) .. (200.75,105) -- cycle ;

\draw (178.46,87.03) -- (147.75,120.71);
\draw (202.75,105) -- (223.75,139) -- (181.75,139) -- cycle;
\draw (178.46,87.03) -- (202.75,105);

\draw (152,4.4) node [anchor=north west] {$r_{T}$};
\draw (144,100.4) node [anchor=north west,font=\small] {$x_{1}$};
\draw (205.95,90.04) node [anchor=north west,font=\small] {$x_{2}$};
\draw (182,25.4) node [anchor=north west] {$x_{k}$};


\end{tikzpicture}
    \subcaption{\CASE{3}: one top ancestor, not path-aligned.}
    \label{fig:case3}
  \end{subfigure}
  \hfill
  \begin{subfigure}{0.45\linewidth}
    \centering
    \tikzset{every picture/.style={line width=0.75pt}} 

\begin{tikzpicture}[x=0.75pt,y=0.75pt,yscale=-1,xscale=1]

\draw (152.43,18.14) -- (155.43,99.71);

\draw (154.38,102.71) -- (173,130.67) -- (135.75,130.67) -- cycle;
\draw (155.03,72.07) -- (212.21,138.5) -- (108.08,138.5) -- cycle;

\draw [fill=black] (152.71,73.29) .. controls (152.71,71.79) and (153.93,70.57) .. (155.43,70.57)
.. controls (156.93,70.57) and (158.14,71.79) .. (158.14,73.29)
.. controls (158.14,74.78) and (156.93,76) .. (155.43,76)
.. controls (153.93,76) and (152.71,74.78) .. (152.71,73.29) -- cycle;
\draw [fill=black] (152.04,102.71) .. controls (152.04,101.22) and (153.25,100) .. (154.75,100)
.. controls (156.25,100) and (157.46,101.22) .. (157.46,102.71)
.. controls (157.46,104.21) and (156.25,105.43) .. (154.75,105.43)
.. controls (153.25,105.43) and (152.04,104.21) .. (152.04,102.71) -- cycle;
\draw [fill=black] (149.71,15.43) .. controls (149.71,13.93) and (150.93,12.71) .. (152.43,12.71)
.. controls (153.93,12.71) and (155.14,13.93) .. (155.14,15.43)
.. controls (155.14,16.93) and (153.93,18.14) .. (152.43,18.14)
.. controls (150.93,18.14) and (149.71,16.93) .. (149.71,15.43) -- cycle;
\draw [fill=black] (243.04,86) .. controls (243.04,84.5) and (244.25,83.29) .. (245.75,83.29)
.. controls (247.25,83.29) and (248.46,84.5) .. (248.46,86)
.. controls (248.46,87.5) and (247.25,88.71) .. (245.75,88.71)
.. controls (244.25,88.71) and (243.04,87.5) .. (243.04,86) -- cycle;

\draw (153.43,37.71) -- (245.75,86);
\draw (245.75,86) -- (266.75,120) -- (224.75,120) -- cycle;


\draw (129,6.4) node [anchor=north west] {$r_{T}$};
\draw (159,88.4) node [anchor=north west,font=\footnotesize] {$v$};
\draw (245.95,67.04) node [anchor=north west,font=\small] {$u$};
\draw (139,58.4) node [anchor=north west] {$a$};


\end{tikzpicture}
    \subcaption{\CASE{4}: general case; eliminable element $a$.}
    \label{fig:case4}
  \end{subfigure}

  \caption{Ancestor–descendant configurations for the four cases used in the proof of Lemma~\ref{lem:pairwise-sufficiency}.}
  \label{fig:four-cases}
\end{figure}

\begin{proof}
For an edge $e=\{u,v\}$ and a set $A$, $e\in\delta(A)$ iff exactly one of $\{u,v\}$ lies in $A$.
Hence
\begin{equation}\label{eq:iff}
e\in\bigcap_{j=1}^k \delta(A_j)
\iff
\forall j\in[k]\; \bigl|\{u,v\}\cap A_j\bigr|=1.
\end{equation}
When the $A_j$ are laminar, \eqref{eq:iff} forces the following:
(i) if $A\subset B$, then any edge cut by \emph{both} $\delta(A)$ and $\delta(B)$
has one endpoint in $A$ and the other in $V\setminus B$;
(ii) if $A\cap B=\emptyset$, then any edge in $\delta(A)\cap\delta(B)$
has one endpoint in $A$ and the other in $B$. Assume $k\ge3$ unless noted.
\smallskip
\noindent\CASE{1} (all $x_i$ pairwise disjoint).
If $x_i\perp_T x_j$ for all $i\neq j$, any edge in
$\bigcap_{i=1}^k \delta(D_T(x_i))$ would need one endpoint in
\emph{each} distinct $D_T(x_i)$, impossible for $k\ge3$. Thus the
intersection is empty and $\sigma=0$.
\medskip
\noindent\CASE{2} (all $x_i$ lie on one root--to--leaf path).
Order $S$ by depth and let $x_q$ be shallowest and $x_p$ deepest.
Then $D_T(x_p)\subset\cdots\subset D_T(x_q)$ and
$V\setminus D_T(x_p)\supset\cdots\supset V\setminus D_T(x_q)$.
We claim
\[
\bigcap_{i=1}^k \delta(D_T(x_i))=\delta(D_T(x_p))\cap\delta(D_T(x_q)).
\]
The $\subseteq$ direction is immediate. For $\supseteq$, if
$e\in\delta(D_T(x_p))\cap\delta(D_T(x_q))$ then its endpoints are
$u\in D_T(x_p)$ and $v\in V\setminus D_T(x_q)$. For any intermediate
$D_T(x_i)$ we have $u\in D_T(x_i)$ and $v\notin D_T(x_i)$, so
$e\in\delta(D_T(x_i))$. Hence
$\sigma(D_T(x_1),\ldots,D_T(x_k))=\sigma(D_T(x_p),D_T(x_q))$.
\medskip
\noindent\CASE{3} (one top ancestor but not path-aligned).
Suppose there exists $x^\top$ with $S\setminus\{x^\top\}\subset D_T(x^\top)$
but the vertices in $S$ are not all on a single root--to--leaf path.
Let $x_1$ be deepest in $S$ and choose $x_2$ that is not an ancestor of $x_1$.
Then $D_T(x_1),D_T(x_2)\subset D_T(x^\top)$ and $D_T(x_1)\cap D_T(x_2)=\emptyset$.
Any $e\in\delta(D_T(x_1))\cap\delta(D_T(x_2))$ has endpoints split across
$D_T(x_1)$ and $D_T(x_2)$, so both endpoints lie inside $D_T(x^\top)$,
hence $e\notin\delta(D_T(x^\top))$. Therefore
$\delta(D_T(x_1))\cap\delta(D_T(x_2))\cap\delta(D_T(x^\top))=\emptyset$,
and the $k$-wise intersection is empty: $\sigma=0$.
\medskip
\noindent\CASE{4} (general case; eliminable element).
If none of Cases 1--3 applies, then there exist $a,v\in S$ with
$D_T(v)\subset D_T(a)$ and some $u\in S$ with $u\notin D_T(a)$.
Choose such a pair with $a$ as shallow as possible; then $u\perp_T a$,
and consequently $u\perp_T v$ as well.

We claim
\begin{equation}\label{eq:elim}
\delta(D_T(a))\cap\delta(D_T(v))\cap\delta(D_T(u))
\;=\;
\delta(D_T(v))\cap\delta(D_T(u)).
\end{equation}
Indeed, for $\supseteq$ take $e\in\delta(D_T(v))\cap\delta(D_T(u))$.
Since $D_T(v)\subset D_T(a)$ and $u\perp_T a$, $e$ has endpoints
$x\in D_T(v)\subset D_T(a)$ and $y\in D_T(u)\subset V\setminus D_T(a)$,
so $e\in\delta(D_T(a))$. The reverse inclusion is trivial.
Thus removing $D_T(a)$ from the tuple does not change the intersection:
\[
\bigcap_{x\in S}\delta(D_T(x))
=
\bigcap_{x\in S\setminus\{a\}}\delta(D_T(x)).
\]
Iterating this elimination strictly reduces $k$ and must terminate
in one of the previous cases, which yield either $0$ (Cases 1 or 3)
or a pairwise intersection (Case 2).
\end{proof}
\smallskip
\textbf{Consequence (for weighted graphs).}
If edges carry nonnegative weights $w:E\to\mathbb{R}_{\ge0}$, the same
reasoning applies verbatim with $\sigma$ interpreted as the total weight
of the intersection, by linearity over edges.
\section{Computing all $\sigma(u,v)$ in $O(m\log n+n^2)$ time}
\label{app:sigma}

\noindent\textbf{Goal.}
For a tree $T$ rooted at $r_T$ and a weighted graph $(G,w)$ on the same vertex set,
compute
\[
\sigma(u,v)\;=\;w\!\bigl(\delta(D_T(u)\oplus D_T(v))\bigr)
\quad\text{for all } u,v\in V.
\]
We prove a preprocessing bound of $O(m\log n+n^2)$ time (or $O(m+n^2)$ offline) and $O(n^2)$ space.

\medskip
\noindent\textbf{Tree primitives.}
Compute Euler tour entry/exit times $\mathrm{tin}(\cdot),\mathrm{tout}(\cdot)$, depths, and an LCA structure.
We write $x\preceq y$ iff $x$ is an ancestor of $y$ in $T$.
We also fix either (i) a heavy--light decomposition (HLD) to support $O(\log n)$
root-to-node path additions, or (ii) an offline DFS accumulation that achieves total $O(m+n)$ for all path updates.

\medskip
\noindent\textbf{Key indicator identity.}
For an edge $e=\{a,b\}\in E(G)$ and a tree vertex $u$,
\[
e\in\delta(D_T(u))
\quad\Longleftrightarrow\quad
(u\preceq a)\ \oplus\ (u\preceq b),
\]
i.e., $u$ is an ancestor of exactly one endpoint of $e$.
Equivalently, if $U_e:=\mathrm{Anc}(a)\triangle \mathrm{Anc}(b)$,
then $\mathbf{1}[e\in\delta(D_T(u))]=\mathbf{1}[u\in U_e]$.

\medskip
\noindent\textbf{From $\sigma$ to unary and pairwise counts.}
Define
\[
\tau(u)\;:=\;\sum_{e\in E} w(e)\,\mathbf{1}[u\in U_e]
\qquad\text{and}\qquad
\pi(u,v)\;:=\;\sum_{e\in E} w(e)\,\mathbf{1}[u\in U_e]\mathbf{1}[v\in U_e].
\]
Then, for all $u,v$,
\begin{equation}
\label{eq:sigma-decomp}
\sigma(u,v)\;=\;\tau(u)+\tau(v)-2\,\pi(u,v).
\end{equation}
Thus it suffices to compute all $\tau(\cdot)$ and all $\pi(\cdot,\cdot)$.

\medskip
\noindent\textbf{Step 1: computing all $\boldsymbol{\tau(u)}$ in $O(m\log n)$ (or $O(m)$) time.}
For each $e=\{a,b\}$ with weight $w$, let $\ell=\mathrm{lca}(a,b)$.
Perform three root-to-node \emph{path-additions} of value $w$: add $+w$ on
$r_T\rightsquigarrow a$ and on $r_T\rightsquigarrow b$, and add $-2w$ on $r_T\rightsquigarrow \ell$.
With HLD, each edge contributes $O(\log n)$; total $O(m\log n)$.
In the offline variant, mark endpoints $a,b$ with $+w$ and $\ell$ with $-2w$, and propagate sums to ancestors in a single DFS, achieving $O(m+n)$ total.
A standard inclusion–exclusion argument shows that the final value stored at node $u$ equals
\[
\tau(u)=\sum_{e} w(e)\,\bigl(\mathbf{1}[u\preceq a]+\mathbf{1}[u\preceq b]-2\mathbf{1}[u\preceq \ell]\bigr)
=\sum_{e} w(e)\,\mathbf{1}[u\in U_e]
= w\!\bigl(\delta(D_T(u))\bigr).
\]

\medskip
\paragraph{Step 2: From $\pi$ to a 2D ancestor–ancestor sum.}
Write $U_e := \Anc(a)\triangle \Anc(b)$ for $e=\{a,b\}$ and $\ell=\operatorname{lca}(a,b)$.
Then
\[
1_{U_e} \;=\; 1_{\Anc(a)} + 1_{\Anc(b)} - 2\,1_{\Anc(\ell)}.
\]
Hence, expanding $1_{U_e}(u)\,1_{U_e}(v)$,
\[
\pi(u,v)\;=\;\sum_{e\in E} w(e)\,1_{U_e}(u)\,1_{U_e}(v)
\;=\;\sum_{x,y\in V}\beta[x,y]\;1[u\preceq x]\;1[v\preceq y],
\]
where the $n\times n$ coefficient table $\beta$ is initialized to $0$ and updated for each edge $e=\{a,b\}$ with LCA $\ell$ by:
\[
\begin{aligned}
&\beta[a,a]{+}{=}\,w(e),\quad \beta[b,b]{+}{=}\,w(e),\quad \beta[a,b]{+}{=}\,w(e),\quad \beta[b,a]{+}{=}\,w(e),\\
&\beta[a,\ell]{-}{=}\,2w(e),\ \beta[\ell,a]{-}{=}\,2w(e),\ \beta[b,\ell]{-}{=}\,2w(e),\ \beta[\ell,b]{-}{=}\,2w(e),\\
&\beta[\ell,\ell]{+}{=}\,4w(e).
\end{aligned}
\]

\paragraph{Step 3: Computing all $\pi(u,v)$ in $O(n^2)$ time.}
Define
\[
F(u,v)\ :=\ \sum_{x\in\Sub(u)}\sum_{y\in\Sub(v)} \beta[x,y].
\]
Then $\pi(u,v)=F(u,v)$ for all $u,v$.
Compute two auxiliary tables in $O(n^2)$ time:
\[
\RowSum(u,v):=\sum_{x\in\Sub(u)}\beta[x,v],\qquad
\ColSum(u,v):=\sum_{y\in\Sub(v)}\beta[u,y].
\]
(For each fixed $v$, compute $\RowSum(\cdot,v)$ by a single bottom-up pass on $T$; analogously for $\ColSum$ with $u$ fixed.)
Now evaluate $F(\cdot,\cdot)$ bottom-up on nondecreasing $\max\{\depth(u),\depth(v)\}$ using the partition
\[
\Sub(u)\times\Sub(v)=\{(u,v)\}\;\dot\cup\;\Big(\!\bigcup_{p\in S(u)} \Sub(p)\times\{v\}\!\Big)\;\dot\cup\;\Big(\!\bigcup_{q\in S(v)} \{u\}\times\Sub(q)\!\Big)\;\dot\cup\;\Big(\!\bigcup_{p\in S(u)}\bigcup_{q\in S(v)} \Sub(p)\times\Sub(q)\!\Big),
\]
which yields the recurrence
\[
F(u,v)\;=\;\beta[u,v]\;+\;\sum_{p\in S(u)} \RowSum(p,v)\;+\;\sum_{q\in S(v)} \ColSum(u,q)\;+\;\sum_{p\in S(u)}\sum_{q\in S(v)} F(p,q).
\]
Every pair $(p,q)$ contributes to exactly one parent $(u=\parent(p),v=\parent(q))$, so the total work over all $(u,v)$ is $O(n^2)$.
Finally, set $\sigma(u,v)=\tau(u)+\tau(v)-2F(u,v)$.

\medskip
\noindent\textbf{Step 4: assemble $\boldsymbol{\sigma}$.}
Finally, for every $(u,v)$ set
$\sigma(u,v)=\tau(u)+\tau(v)-2\,G(u,v)$ using \eqref{eq:sigma-decomp}.
This pass is $O(n^2)$ and the table occupies $O(n^2)$ space.

\medskip
\noindent\textbf{Correctness.}
Step~1 yields $\tau(u)=w(\delta(D_T(u)))$ by the indicator identity and linearity over edges.
Step~2 expresses $\pi(u,v)$ as a weighted count over vertices $x$ that are descendants of both $u$ and $v$.

\end{document}